\algnewcommand\algorithmicinput{\textbf{Input:}}
\algnewcommand\Input{\item[\algorithmicinput]}
\algnewcommand\algorithmicoutput{\textbf{Output:}}
\algnewcommand\Output{\item[\algorithmicoutput]}
\newif\ifblog
\newif\iftex
\DeclarePairedDelimiter\ceil{\lceil}{\rceil}
\DeclarePairedDelimiter\floor{\lfloor}{\rfloor}
\def\emph#1{\textit{#1}}
\newcommand{\tB}{\tilde{B}}
\def\E{{\mathbb E}}
\def\N{{\mathbb N}}
\def\Z{{\mathbb Z}}
\def\R{{\mathbb R}}
\def\eps{{\epsilon}}
\def\be{{\bf e}}
\def\bx{{\bf x}}
\numberwithin{equation}{section}
\declaretheorem[numberlike=equation]{theorem}
\declaretheorem[name=Theorem,numbered=no]{theorem*}
\declaretheorem[numberlike=equation]{lemma}
\declaretheorem[name=Lemma,numbered=no]{lemma*}
\declaretheorem[numberlike=equation]{corollary}
\declaretheorem[name=Corollary,numbered=no]{corollary*}
\declaretheorem[name=Claim,numbered=no]{claim*}
\newtheorem{definition}[theorem]{Definition}
\theoremstyle{remark}
\declaretheorem[numberlike=equation]{remark}
\declaretheorem[name=Remark,numbered=no]{remark*}
\title{Tracking the $\ell_2$ Norm with Constant Update Time}
\author{
Chi-Ning Chou\thanks{School of Engineering and Applied Sciences, Harvard University, Cambridge, Massachusetts, USA. Supported by NSF awards CCF 1565264 and CNS 1618026. Email: \texttt{chiningchou@g.harvard.edu}, and~\texttt{leizhixian.research@gmail.com}.}
\and Zhixian Lei\footnotemark[1]
\and Preetum Nakkiran\thanks{School of Engineering and Applied Sciences, Harvard University, Cambridge, Massachusetts, USA. Email: \texttt{preetum@cs.harvard.edu}.
Work supported in part by a Simons Investigator Award, NSF
Awards CCF 1565641 and CCF 1715187, and the NSF Graduate Research Fellowship Grant No. DGE1144152.}
}
\newcommand{\fix}[1]{{ #1}}
\begin{document}
\maketitle

\begin{abstract}
The \emph{$\ell_2$ tracking problem} is the task of obtaining a streaming algorithm that, given access to a stream of items $a_1,a_2,a_3,\ldots$ from a universe $[n]$, outputs at each time $t$ an estimate to the $\ell_2$ norm of the \textit{frequency vector} $f^{(t)}\in \mathbb{R}^n$ (where $f^{(t)}_i$ is the number of occurrences of item $i$ in the stream up to time $t$).
The previous work [Braverman-Chestnut-Ivkin-Nelson-Wang-Woodruff, PODS 2017] gave a streaming algorithm  with (the optimal) space using $O(\epsilon^{-2}\log(1/\delta))$ words and  $O(\epsilon^{-2}\log(1/\delta))$ update time to obtain an $\epsilon$-accurate estimate with probability at least $1-\delta$.
We give the first algorithm that achieves  update time of $O(\log 1/\delta)$ which is independent of the accuracy parameter $\epsilon$, together with the nearly optimal space using $O(\epsilon^{-2}\log(1/\delta))$ words.
Our algorithm is obtained using the \textsf{Count Sketch} of [Charilkar-Chen-Farach-Colton, ICALP 2002].
\end{abstract}

\pagenumbering{arabic}

\section{Introduction}
The \textit{streaming model} considers the following setting. One is given a list $a_1,a_2,\dots,a_m\in[n]$ as input where we think of $n$ as extremely large. The algorithm is only allowed to read the input once in a stream and the goal is to answer some predetermined queries using space of size logarithmic in $n$. For each $i\in[n]$ and time $t\in[m]$, define $f^{(t)}_i=|\{1\leq j\leq t:\ a_j=i\}|$ as the frequency of $i$ at time $t$.
Many classical streaming problems are concerned with approximating statistics of $f^{(m)}$ such as the distinct element problem (\textit{i.e., }$\|f^{(m)}\|_0$). One of the most well-studied problems is the one-shot $\ell_2$ estimation problem where the goal is to estimate $\|f^{(m)}\|_2^2$ within multiplicative error $(1\pm\eps)$ and had been achieved by the seminal \textsf{AMS} sketch by Alon et al.~\cite{alon1996space}.

We consider a streaming algorithm $A$ that maintains some logarithmic space and outputs an estimation $\sigma_t$ at the $t^{\text{th}}$ step of the computation. $A$ achieves $\ell_2$ $(\epsilon,\delta)$-tracking if for every input stream $a_1,a_2,\dots,a_m\in[m]$
\begin{equation*}
\Pr\left[ \exists_{t \in [m]} \bigl| \sigma_t - \| f^{(t)} \|_2^2 \bigr| > \epsilon\Delta_t  \right] \leq \delta
\end{equation*}
where the ``normalization factor'' $\Delta_t$ differs between \textit{strong} tracking and \textit{weak} tracking. For $(\epsilon,\delta)$-\textit{strong tracking}, $\Delta_t = \| f^{(t)}\|_2^2$ is the norm squared of the frequency vector up to the time $t$, while for $(\epsilon,\delta)$-weak tracking, $\Delta_t = \| f^{(m)} \|_2^2$ is the norm squared of the overall frequency vector. Note that strong tracking implies weak tracking and weak tracking implies one-shot approximation.
In this work, we focus on $\ell_2$ tracking via linear sketching, where we specify a distribution $D$ on matrices $\Pi \in \R^{k \times n}$, and maintain a sketch vector at time $t$ as $\tilde f^{(t)} \triangleq \Pi f^{(t)}$. Then the estimate $\sigma_t$ is defined as $\|\tilde{f}^{(t)}\|_2^2$.
The space complexity of $A$ is the number of machine words\footnote{Following convention, we assume the size of a machine word is at least $\Omega(\max(\log n, \log m))$ bits.} required by $A$.
The update time complexity of $A$ is the time to update $\sigma_t$, in terms of number of arithmetic operations.

Both weak tracking and strong tracking have been studied in different context~\cite{huang2014tracking,braverman2016beating,braverman2017bptree} and the focus of this paper is on the \textit{update time complexity}. 
Specifically, we are interested in the dependency of update time on the approximation factor $\epsilon$. The state-of-the-art result prior to our work is by Braverman et. al.~\cite{braverman2017bptree} showing that \textsf{AMS} provides weak tracking with $O(\epsilon^{-2}\log(1/\delta))$ update time and $O(\epsilon^{-2}\log(1/\delta))$ words of space.

Apart from tracking, there have been several sketching algorithms for one-shot approximation that have faster update time. Dasgupta et. al.~\cite{dasgupta2010sparse} and Kane and Nelson~\cite{kane2014sparser} showed that sparse JL achieves $O_\delta(\epsilon^{-1})$~\footnote{$O_\delta(\cdot)$ is the same as the usual big O notation except treating $\delta$ as a constant.} update time for $\ell_2$ one-shot approximation. Charikar, Chen, and Farach-Colton~\cite{charikar2002finding} designed the \textsf{CountSketch} algorithm for the heavy hitter problem and Thorup and Zhang~\cite{thorup2004tabulation} showed that it achieve $O_\delta(1)$ update time for $\ell_2$ one-shot approximation.

\paragraph{Update time}
Unlike the space complexity in streaming model, there have been less studies in the update time complexity though it is of great importance in applications. For example, the \textit{packet passing problem}~\cite{krishnamurthy2003sketch} requires the $\ell_2$ estimation in the streaming model with input arrival rate as high as $7.75\times10^6$ packets~\footnote{Each packet has 40 bytes (320 bits).} per second. Thorup and Zhang~\cite{thorup2012tabulation} improved the update time from 182 nanoseconds to 50 nanoseconds and made the algorithm more practical.

While some streaming problems have algorithms with constant update time
(\textit{e.g.,} distinct elements~\cite{kane2010optimal} and $\ell_2$ estimation~\cite{thorup2012tabulation}),
some other important problems do not ($\ell_p$ estimation for $p\neq2$~\cite{kane2011fast}, heavy hitters problems\footnote{There is a memory and update time tradeoff for heavy hitter from space $O(\epsilon^{-2} \log(n/\delta))$ to $O(\epsilon^{-2}(n/\delta))$ to get constant update time. However, achieving constant update time and logarithmic space simultaneously is unknown.}~\cite{charikar2002finding,cormode2005improved}, and tracking problems~\cite{braverman2017bptree}).
Larsen et al.~\cite{larsen2015time} systematically studies the update time complexity and showed lower bounds against heavy hitters, point query, entropy estimation, and moment estimation in the non-adaptive turnstile streaming model.
In particular, they show that $O(\eps^{-2})$-space algorithms
for $\ell_2$ estimation of vectors over $\R^n$, with failure probability $\delta$,
must have update time roughly $ \Omega(\log(1/\delta)/\sqrt{\log n})$. Note that their lower bound does not depend on $\epsilon$.

\paragraph{Space lower bounds}
For one-shot estimation of the $\ell_2$ norm, Kane et al.~\cite{kane2010exact} showed that
$\Theta(\epsilon^{-2}\log m+\log\log n)$ bits of space are required,
for any streaming algorithm. This space lower bound is tight due to the \textsf{AMS} sketch.
However, this only applies in the constant failure probability regime.

In the regime of sub-constant failure probability $\delta$,
known tight lower-bounds on Distributional JL~\cite{KMN11, JW13}
imply that $\Omega(\eps^{-2}\log(1/\delta))$ rows are necessary
for the special case of linear sketching algorithms.~\footnote{
Note that an $(\eps, \delta)$-weak tracking via linear sketch
defines a distribution over matrices that satisfies the
Distributional JL guarantee, with distortion $(1 \pm \eps)$ and failure probability $\delta$.
}
For linear sketches, this lower bound on number of rows
is equivalent to a lower bound on the words of space.

For the regime of faster update time, Kane and Nelson~\cite{kane2014sparser}
shows that \textsf{CountSketch}-type of constructions (with the optimal $\Omega(\eps^{-2}\log(1/\delta))$ rows) require sparsity i.e. number of non-zero elements
$\widetilde\Omega(\eps^{-1}\log(1/\delta))$~\footnote{$\tilde{\Omega}(\cdot)$ is the same as the $\Omega(\cdot)$ notation by ignoring extra logarithmic factor.} per column to achieve distortion $\eps$ and failure probability $\delta$.
But, this does not preclude a sketch with suboptimal dependency on $\delta$ in the number of rows
from having constant sparsity, for example a sketch with $\Omega_\delta(\eps^{-2})$ rows and constant sparsity -- indeed, this is what \textsf{CountSketch} achieves.
Note that in our setting, we can boost constant-failure probability to arbitrarily small failure probability by taking medians of estimators.\footnote{This is not immediate for weak tracking.}
Thus, we may be able to bypass the lower-bounds for linear sketches.

To summarize the situation:
for constant failure probability, it is only known
that linear sketches require dimension $\Omega(\eps^{-2})$,
and it is not known if super-constant sparsity is required for tracking with this optimal dimension.
In particular, it was not known how to achieve say $(\eps, O(1))$-weak tracking for $\ell_2$, with $O(\eps^{-2})$ words of space and constant update time.

\paragraph{Our contributions}
In this paper, we show that there is a streaming algorithm with $O(\log(1/\delta))$ update time and space using $O(\epsilon^{-2}\log(1/\delta))$ words that achieves $\ell_2$ $(\epsilon,\delta)$-weak tracking. 
\begin{theorem}[informal]\label{thm:informal}
For any $\epsilon>0$, $\delta\in(0,1)$, and $n\in\N$. For any insertion-only stream over $[n]$ with frequencies $f^{(1)},f^{(2)},\dots,f^{(m)}$, there exists a  streaming algorithm providing $\ell_2$ $(\epsilon,\delta)$-weak tracking with space using $O(\epsilon^{-2}\log(1/\delta))$ words and $O(\log(1/\delta))$ update time.
\end{theorem}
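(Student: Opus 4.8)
The plan is to instantiate the streaming algorithm using \textsf{CountSketch} with $k = O(\eps^{-2})$ buckets and $r = O(\log(1/\delta))$ independent repetitions, taking the median of the per-repetition estimates. Each repetition hashes items into buckets via a pairwise-independent hash $h:[n]\to[k]$ together with a sign hash $s:[n]\to\{-1,+1\}$, and maintains the bucket vector; the estimate at time $t$ is the sum of squares of the buckets. Using $2$-wise (or mildly higher) independence, updating on the arrival of $a_t$ costs $O(1)$ arithmetic operations per repetition, hence $O(\log(1/\delta))$ total, and the space is $O(k r) = O(\eps^{-2}\log(1/\delta))$ words. So the efficiency claims are immediate from the construction; the real content is the \emph{tracking} guarantee.

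\textbf{Reducing weak tracking to a moment bound along the stream.} The key structural observation is that because the stream is insertion-only, $\|f^{(t)}\|_2^2$ is nondecreasing in $t$, so we only need the estimator to be accurate at a bounded number of ``checkpoint'' times. Concretely, fix the final norm $F \triangleq \|f^{(m)}\|_2^2$ and partition $[m]$ into $O(1/\eps)$ epochs, where epoch $j$ runs over those times $t$ with $\|f^{(t)}\|_2^2 \in (j\eps F, (j+1)\eps F]$. Within a single epoch the true norm squared changes by at most $\eps F$, so if at the \emph{end} of each epoch the estimate is within $\tfrac{\eps}{2} F$ of the truth, and the estimator is monotone (or nearly so) inside the epoch, then the estimate is within $\eps F$ of the truth for \emph{all} $t$ in that epoch. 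Thus it suffices to show: (i) \textsf{CountSketch} is an unbiased estimator of $\|f^{(t)}\|_2^2$ with variance $O(\eps^2 F^2)$ at each fixed time (this is the standard second-moment analysis of \textsf{CountSketch}/\textsf{AMS}, using $4$-wise independence of the sign hash, which gives $\Var \le \tfrac{2}{k}\|f^{(t)}\|_2^4 \le \tfrac{2}{k}F^2$); and (ii) a union bound over the $O(1/\eps)$ epoch endpoints, combined with the median-of-$O(\log(1/\delta))$ trick, drives the total failure probability below $\delta$ — here one needs the per-repetition failure probability at each checkpoint to be a small enough constant, say $1/10$, so that the median of $r$ repetitions fails at a fixed checkpoint with probability $2^{-\Omega(r)}$, and then $r = O(\log(1/\delta) + \log(1/\eps))$; absorbing $\log(1/\eps)$ is fine since $\eps^{-2} \ge \log(1/\eps)$, but more carefully one argues the union bound over checkpoints is paid for by the variance slack rather than by $r$.

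\textbf{Handling the within-epoch fluctuation — the main obstacle.} The subtle point, and where I expect the real work to lie, is that the \textsf{CountSketch} estimate $\|\tilde f^{(t)}\|_2^2$ is \emph{not} monotone in $t$: a single insertion can decrease a bucket's magnitude due to sign cancellation. So controlling the estimate at epoch endpoints does not literally control it in between. The fix is a chaining / maximal-inequality argument: within an epoch one bounds $\E[\max_{t \in \text{epoch } j} |\,\|\tilde f^{(t)}\|_2^2 - \|f^{(t)}\|_2^2 - (\|\tilde f^{(t_j)}\|_2^2 - \|f^{(t_j)}\|_2^2)\,|]$, i.e. the maximal deviation of the \emph{error process} from its value at the epoch start. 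Writing the error at time $t$ as $\sum_{i \ne i'} s(i)s(i') [h(i)=h(i')] f^{(t)}_i f^{(t)}_{i'}$, this is a sum of martingale-difference-like increments as $t$ advances, and since the coordinates only increase and the epoch-total increase in $\|f\|_2^2$ is $\le \eps F$, a Doob/Kolmogorov-type maximal inequality (as in the strong-tracking analyses of \cite{braverman2017bptree}) bounds the maximal fluctuation by $O(\eps F)$ with the desired constant probability, at the cost of only constant factors in $k$ and $r$. Combining: with probability $\ge 1-\delta$, simultaneously for every epoch the median estimator is within $\tfrac{\eps}{2}F$ at the endpoint and fluctuates by at most $\tfrac{\eps}{2}F$ within the epoch, hence within $\eps F$ of $\|f^{(t)}\|_2^2 \le F = \|f^{(m)}\|_2^2$ for all $t$, which is exactly $(\eps,\delta)$-weak tracking. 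The final tally is $k = O(\eps^{-2})$, $r = O(\log(1/\delta))$, giving $O(\eps^{-2}\log(1/\delta))$ words and $O(\log(1/\delta))$ update time, with the constant update time per repetition coming from the bounded independence of the hash families.
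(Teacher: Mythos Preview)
Your high-level algorithm (median of $O(\log(1/\delta))$ copies of \textsf{CountSketch} with $k=O(\eps^{-2})$ buckets) matches the paper's, and the efficiency accounting is correct. The gap is in the tracking analysis.

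\textbf{The epoch decomposition does not avoid the hard part.} You reduce to two tasks: accuracy at $O(1/\eps)$ checkpoints, and within-epoch control of the error process. But the within-epoch step is precisely the supremum bound you are trying to prove, just restricted to a sub-stream. There can be arbitrarily many times inside one epoch, and the error $E(t)=\sigma^\top\tilde B_{\eta,f^{(t)}}\sigma$ is a \emph{deterministic} function of $t$ once $(h,s)$ are fixed; the randomness lives entirely in the hash/sign functions, not in the time evolution. So Doob/Kolmogorov maximal inequalities do not apply: there is no martingale here to take a maximal inequality of. The reference to \cite{braverman2017bptree} is a chaining argument with Hanson--Wright, not a martingale argument, and the within-epoch problem is not qualitatively easier than the global one.

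\textbf{The union bound over $O(1/\eps)$ epochs costs you the $\eps$-free update time.} Even granting a per-epoch constant-probability bound, you then need all $O(1/\eps)$ epochs to succeed simultaneously. Paying this in $r$ gives $r=\Theta(\log(1/\delta)+\log(1/\eps))$, which is not $O(\log(1/\delta))$; paying it ``in the variance slack'' via Chebyshev forces $k=\Omega(\eps^{-3})$. Neither yields the claimed bounds.

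\textbf{What the paper actually does.} It bounds $\E\bigl[\sup_{t}\gamma(f^{(t)})\mid\eta\bigr]$ directly, by conditioning on the bucket assignment $\eta$ and then doing a Dudley-type chaining over the Rademacher signs $\sigma$ with Hanson--Wright. The crucial \textsf{CountSketch}-specific step is to write $\gamma(x)=\sigma^\top\tilde B_{\eta,x}\sigma$ where $\tilde B_{\eta,x}$ is $\Pi^\top\Pi$ with the diagonal removed and weighted by $x$; for \textsf{CountSketch} (unlike \textsf{AMS}) the diagonal dominates $\|\Pi^\top\Pi\|_F$, so removing it is essential. One then constructs $\eps$-nets directly on the matrices $\{\tilde B_{\eta,f^{(t)}}\}_t$ (using the insertion-only net lemma in Frobenius norm) and chains to get $\E[\sup_t\gamma(f^{(t)})\mid\eta]=O(\|\tilde B_{\eta,f^{(m)}}\|_F)$. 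A separate Markov argument over $\eta$ gives $\|\tilde B_{\eta,f^{(m)}}\|_F=O(\|f^{(m)}\|_2^2/\sqrt{k})$ with constant probability. This yields constant-probability weak tracking for a single sketch with $k=O(\eps^{-2})$, after which the median trick gives the theorem with no $\eps$-dependence in $r$.
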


Further, by applying a standard union bound argument in~\autoref{lem:weak to strong}, the same algorithm can achieve $\ell_2$ strong tracking as well.
\begin{corollary}
For any $\epsilon>0$, $\delta\in(0,1)$, and $n\in\N$. For any insertion-only stream over $[n]$ with frequencies $f^{(1)},f^{(2)},\dots,f^{(m)}$, there exists a  streaming algorithm providing $\ell_2$ $(\epsilon,\delta)$-strong tracking with $O(\epsilon^{-2}\log(1/\delta)\log\log m)$ words and $O(\log(1/\delta)\log\log m)$ update time.
\end{corollary}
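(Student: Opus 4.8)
The plan is to derive the corollary as a black-box consequence of the weak tracker of \autoref{thm:informal} together with the generic weak-to-strong reduction of \autoref{lem:weak to strong}; I will not reprove the reduction, but it helps to recall its shape so that the parameter bookkeeping is transparent.

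The reduction rests on an observation special to insertion-only streams: the map $t \mapsto \|f^{(t)}\|_2^2$ is nondecreasing and, since $\|f^{(m)}\|_2^2 \le \|f^{(m)}\|_1^2 = m^2$, takes values in $[1, m^2]$, so it crosses at most $R = O(\log m)$ consecutive powers of two. These crossing times partition $[m]$ into at most $R$ \emph{epochs}, and on the epoch whose final value is $N_j$ the quantity $\|f^{(t)}\|_2^2$ never drops below $N_j/2$. I would then run the algorithm of \autoref{thm:informal} with accuracy $\epsilon$ and failure probability $\delta' = \delta/R$, and invoke its tracking guarantee \emph{localized to each epoch}: on epoch $j$, except with probability $\delta'$, the estimate satisfies $|\sigma_t - \|f^{(t)}\|_2^2| \le \epsilon N_j \le 2\epsilon\|f^{(t)}\|_2^2$ \emph{simultaneously} for all $t$ in that epoch --- the point being that a tracker, unlike a one-shot estimator, already controls the supremum over the epoch, so there is no union bound over individual time steps. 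A union bound over the $\le R$ epochs then gives $|\sigma_t - \|f^{(t)}\|_2^2| \le 2\epsilon\|f^{(t)}\|_2^2$ for every $t \in [m]$ with probability at least $1 - R\delta' = 1 - \delta$; rescaling $\epsilon$ by a factor of two yields $(\epsilon,\delta)$-strong tracking.

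What remains is arithmetic. Since $R = O(\log m)$, we have $\log(1/\delta') = \log(1/\delta) + O(\log\log m)$; using $a+b \le ab+1$ for $a,b\ge 1$, this is $O(\log(1/\delta)\log\log m)$ whenever $m$ and $1/\delta$ exceed a fixed constant. Substituting $\delta'$ into the $O(\epsilon^{-2}\log(1/\delta'))$-word and $O(\log(1/\delta'))$-update-time bounds of \autoref{thm:informal} gives the claimed $O(\epsilon^{-2}\log(1/\delta)\log\log m)$ words and $O(\log(1/\delta)\log\log m)$ update time. The space does not pick up an extra factor of $R$, since the reduction keeps only $O(1)$ tracking instances live at once (the current epoch's, plus the next during a transition), so each update still touches $O(1)$ instances.

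The step I expect to be the actual content --- and the part supplied by \autoref{lem:weak to strong} rather than by the computation above --- is the \emph{localized} tracking claim: the algorithm does not know the epoch boundaries in advance, so it must detect them online from its own constant-factor-accurate running estimate and arrange that the error over epoch $j$ is governed by the epoch's own scale $N_j$ rather than by the final norm $\|f^{(m)}\|_2^2$. Checking that this localization is legitimate (and that re-instantiation does not blow up the space) is the one place where real care is needed; it uses no probabilistic ingredient beyond \autoref{thm:informal}.
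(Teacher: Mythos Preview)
Your core argument is correct and is exactly the paper's route: instantiate the weak tracker of \autoref{thm:informal} with failure probability $\delta' = \delta/O(\log m)$ and invoke \autoref{lem:weak to strong}, then do the arithmetic $\log(1/\delta') = \log(1/\delta) + O(\log\log m) = O(\log(1/\delta)\log\log m)$.

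However, your last two paragraphs misread what \autoref{lem:weak to strong} actually does, and this leads you to manufacture a difficulty that is not there. The reduction is purely an \emph{analysis} statement about a \emph{single} linear sketch: you run one instance of the weak tracker for the entire stream, and in the analysis you apply its weak-tracking guarantee separately to each of the $O(\log m)$ \emph{prefixes} $[T_j]$ ending at the norm-doubling times. Weak tracking on the prefix $[T_j]$ bounds the error by $\epsilon\|f^{(T_j)}\|_2^2 \le 2\epsilon\|f^{(t)}\|_2^2$ for all $t$ in epoch $j$, and a union bound over the $O(\log m)$ prefixes finishes. The algorithm never needs to know or detect the epoch boundaries, never restarts, and never maintains more than one sketch; the epochs exist only in the analyst's head. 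So there is no ``localization'' step requiring care, no online boundary detection, and no question of whether re-instantiation blows up the space --- there is no re-instantiation. You can simply delete those two paragraphs.
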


The algorithm in the main theorem is obtained by running $O(\log(1/\delta))$ many copies of \textsf{CountSketch} and taking the median.

The main techniques used in the proof are the chaining argument and Hansen-Wright inequality which are also used in~\cite{braverman2017bptree} to show the tracking properties of \textsf{AMS}. However, direct applications of these tools on the \textsf{CountSketch} algorithm would not give the desired bounds due to the sparse structure of the sketching matrix. To overcome this issue, we have to dig into the structure of sketching matrix of \textsf{CountSketch}. We will compare the difference between our techniques and that in~\cite{braverman2017bptree} after presenting the proof of~\autoref{thm:informal} (see~\autoref{rmk:differences}).\\

The rest of the paper is organized as follows.
Some preliminaries are provided in~\autoref{sec:preliminaries}.
In~\autoref{sec:cs weak tracking},
we prove our main theorem showing that \textsf{CountSketch} with $O(\eps^{-2})$ rows achieves $\ell_2$ $(\epsilon,O(1))$-weak tracking with constant update time.
As for the $\ell_2$ strong tracking, we discuss some upper and lower bounds in~\autoref{sec:strong}. In~\autoref{sec:conclusion}, we discuss some future directions and open problems.

\section{Preliminaries}\label{sec:preliminaries}
In the following, $n\in\N$ denotes the size of the universe, $k$ denotes the number of rows of the sketching matrix, $t$ denotes the time, and $m$ denote the final time. We let $[n]=\{1,2,\dots,n\}$ and use $\tilde{O}(\cdot)$ and $\tilde{\Omega}(\cdot)$ to denote the usual $O(\cdot)$ and $\Omega(\cdot)$ with some extra poly-logarithmic factor.

The input of the streaming algorithm is a list $a_1,a_2,\dots,a_m\in[n]$. For each $i\in[n]$ and time $t\in[m]$, define $f^{(t)}_i=|\{1\leq j\leq t:\ a_j=i\}|$ as the frequency of $i$ at time $t$. The one-shot $\ell_2$ approximation problem is to produce an estimate for $\|f^{(m)}\|_2^2$ with $(1\pm\epsilon)$ multiplicative error and success probability at least $1-\delta$ for $\epsilon>0$ and $\delta\in(0,1)$.

\subsection{$\ell_2$ tracking}
Here, we give the formal definition of $\ell_2$ tracking for sketching algorithm.
\begin{definition}[$\ell_2$ tracking]
For any $\epsilon>0, \delta\in(0,1)$, and $n,m\in\N$. Let $f^{(1)},f^{(2)},\dots,f^{(m)}$ be the frequency of an insertion-only stream over $[n]$ and $\tilde{f}^{(1)},\tilde{f}^{(2)},\dots,\tilde{f}^{(m)}$ be its (randomized) approximation produced by a sketching algorithm. We say the algorithm provides $\ell_2$ $(\epsilon,\delta)$-strong tracking if
$$
\Pr\left[\exists_{t \in [m]},\ \Bigl|\|\tilde{f}^{(t)}\|_2^2-\|f^{(t)}\|_2^2\Bigr|>\epsilon\|f^{(t)}\|_2^2\right]\leq\delta.
$$
We say the algorithm provides $\ell_2$ $(\epsilon,\delta)$-weak tracking if
$$
\Pr\left[\exists_{t \in [m]},\ \Bigl|\|\tilde{f}^{(t)}\|_2^2-\|f^{(t)}\|_2^2\Bigr|>\epsilon\|f^{(m)}\|_2^2\right]\leq\delta.
$$
\end{definition}
Note that the difference between the two tracking guarantee is that in strong tracking we bound the deviation of the estimate from the true norm squared by $\epsilon\|f^{(t)}\|_2^2$ while in the weak tracking we bound this deviation by $\epsilon\|f^{(m)}\|_2^2$.


\subsection{\textsf{AMS} sketch and \textsf{CountSketch}}
Alon \textit{et. al.}~\cite{alon1996space} proposed the seminal \textsf{AMS} sketch for $\ell_2$ approximation in the streaming model.
In \textsf{AMS} sketch, consider $\Pi\in\R^{k\times n}$ where $\Pi_{j,i}=\sigma_{j,i}/\sqrt{k}$ and $\sigma_{j,i}$ is i.i.d.~Rademacher for each $j\in[m],i\in[n]$.
When $k=O(\epsilon^{-2})$, \textsf{AMS} sketch approximates $\ell_2$ norm within $(1\pm\epsilon)$ multiplicative error. Note that the update time of \textsf{AMS} sketch is $k$ since the matrix $\Pi$ is dense.

Charikar, Chen, and Farach-Colton~\cite{charikar2002finding} proposed the following \textsf{CountSketch} algorithm for the heavy hitter problem and Thorup and Zhang~\cite{thorup2004tabulation} showed that \textsf{CountSketch} is also able to solve the $\ell_2$ approximation.
Here, consider $\Pi\in\R^{k\times n}$ where we denote the $i^{\text{th}}$ column of $\Pi$ as $\Pi_i$ for each $i\in[n]$. $\Pi_i$ is defined as follows. First, pick $j\in[k]$ uniformly and set $\Pi_{j,i}$ to be an independent Rademacher. Next, set the other entries in $\Pi_i$ to be 0. Note that unlike \textsf{AMS} sketch, the normalization term in \textsf{CountSketch} is 1 since there is exactly one non-zero entry in each column.~\cite{charikar2002finding} showed that \textsf{CountSketch} provides one-shot $\ell_2$ approximation with $O(\epsilon^{-2})$ rows.
\begin{lemma}[\cite{charikar2002finding,thorup2004tabulation}]\label{lem:CS estimation}
	Let $\epsilon>0$, $\delta\in(0,1)$, and $n\in\N$. Pick $k=\Omega(\epsilon^{-2}\delta^{-1})$, we have for any $x\in\R^n$,
    \begin{equation*}
    \Pr_{\Pi}\left[|\|\Pi \bx\|_2^2-\|\bx\|_2^2|>\epsilon\|\bx\|_2^2\right]\leq\delta.
    \end{equation*}
\end{lemma}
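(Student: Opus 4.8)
The plan is to compute the first two moments of the estimator $\|\Pi \bx\|_2^2$ and then apply Chebyshev's inequality. Write $\Pi_i = \sigma_i e_{h(i)}$ where $h\colon[n]\to[k]$ is uniformly random and $\sigma_i$ are i.i.d.\ Rademacher, independent of $h$. Then for each row $j\in[k]$, the $j$-th coordinate of $\Pi\bx$ is $\sum_{i\colon h(i)=j}\sigma_i x_i$, so
\[
\|\Pi\bx\|_2^2 \;=\; \sum_{j=1}^k\Bigl(\sum_{i\colon h(i)=j}\sigma_i x_i\Bigr)^2 \;=\; \sum_i x_i^2 \;+\; \sum_{i\neq i'} \sigma_i\sigma_{i'} x_i x_{i'}\,\mathbf{1}[h(i)=h(i')].
\]
First I would take the expectation: conditioning on $h$ and using $\E[\sigma_i\sigma_{i'}]=0$ for $i\neq i'$ kills the cross terms, giving $\E[\|\Pi\bx\|_2^2]=\|\bx\|_2^2$, so the estimator is unbiased.

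Next I would bound the variance. Squaring the cross-term sum and taking expectations, the only surviving terms are those where the Rademacher signs pair up, i.e.\ pairs $(i,i')$ matched with $(i',i)$ or $(i,i')$ again; using $\Pr[h(i)=h(i')]=1/k$ for $i\neq i'$ one gets
\[
\Var\bigl(\|\Pi\bx\|_2^2\bigr) \;=\; \frac{2}{k}\sum_{i\neq i'} x_i^2 x_{i'}^2 \;\le\; \frac{2}{k}\|\bx\|_2^4.
\]
Then Chebyshev's inequality yields
\[
\Pr\Bigl[\bigl|\|\Pi\bx\|_2^2-\|\bx\|_2^2\bigr| > \epsilon\|\bx\|_2^2\Bigr] \;\le\; \frac{\Var(\|\Pi\bx\|_2^2)}{\epsilon^2\|\bx\|_2^4} \;\le\; \frac{2}{k\,\epsilon^2}.
\]
Choosing $k = \Omega(\epsilon^{-2}\delta^{-1})$ with a large enough constant makes this at most $\delta$, as claimed.

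The routine part is the second-moment computation; the only place to be careful is the bookkeeping of which index-pairs contribute to the variance (distinguishing the $\sigma$-pairing structure from the collision events $h(i)=h(i')$), and confirming that terms with three or four distinct indices vanish because at least one $\sigma$ appears to an odd power. There is no real obstacle here — this is the standard Thorup–Zhang analysis — but I would flag that this lemma only gives the claimed guarantee with the stated (suboptimal in $\delta$) number of rows; the dependence on $\delta$ is later improved to $O(\log(1/\delta))$ by running $O(\log(1/\delta))$ independent copies and taking the median, which is where the main theorem's parameters come from.
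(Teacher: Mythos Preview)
Your proof is correct and is exactly the standard second-moment/Chebyshev argument from the cited works. Note that the paper itself does not supply a proof of this lemma; it simply cites \cite{charikar2002finding,thorup2004tabulation} and uses the statement as a black box (the closely related variance computation $\E_\eta\|\tB_{\eta,x}\|_F^2 = \frac{1}{k}\sum_{i\neq i'}x_i^2x_{i'}^2\le \|x\|_2^4/k$ does appear later in the proof of \autoref{lem:concentration Frobenius}, and matches your calculation). Your bookkeeping and your remark about the median trick are both accurate.
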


\paragraph{Implement \textsf{CountSketch} in logarithmic space}
Previously, we defined \textsf{CountSketch} using uniformly independent randomness, which requires space $\Omega(nk)$. However, one could see that in the proof of~\autoref{thm:CS weak} we actually only need 8-wise independence. Thus, the space required can be reduced to $O(\log n)$ for each row.
It is well known that \textsf{CountSketch} with $k$ rows can be implemented with 8-wise independent hash family using $O(k)$ words. We describe the whole implementation in~\autoref{sec:CS} for completeness.

\subsection{$\epsilon$-net for insertion-only stream}
In our analysis, we will use the following existence of a small $\epsilon$-net for insertion-only streams.

\begin{definition}[$\eps$-net]
	Let $S\subseteq\R^n$ be a set of vectors. For any $\eps>0$, we say $E\subseteq\R^n$ is an $\eps$-net for $S$ with respect to $\ell_2$ norm if for any $x\in S$, there exists $y\in E$ such that $\|x-y\|_2\leq\eps$.
\end{definition}

\begin{lemma}[\cite{braverman2016beating}]\label{lem:eps net insertion-only stream}
Let $\{x^{(t)}\}_{t\in[m]}$ be an insertion-only stream. For any $\epsilon>0$, there exists a size $\left(1+\epsilon^{-2}\cdot\|x^{(m)}\|_2\right)$ $\epsilon$-net for $\{x^{(t)}\}_{t\in[m]}$ with respect to $\ell_2$ norm. Moreover, the elements in the net are all from $\{x^{(t)}\}_{t\in[m]}$.  
\end{lemma}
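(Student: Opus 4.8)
The plan is to build the net by a single greedy left-to-right pass over the stream, and to bound its size by a scalar potential argument that uses the insertion-only structure crucially.

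\textbf{Construction.} Set $E := \{x^{(1)}\}$ and let $y$ always denote the element most recently added to $E$. For $t = 2,\dots,m$: if $\|x^{(t)} - y\|_2 > \epsilon$ then add $x^{(t)}$ to $E$ and reset $y \leftarrow x^{(t)}$; otherwise do nothing. By construction every element of $E$ is one of the $x^{(s)}$, which gives the ``moreover'' part. To see $E$ is an $\epsilon$-net for $\{x^{(t)}\}_{t\in[m]}$, fix any $t$ and let $s \le t$ be the largest index at which an element was added to $E$ (this exists, since $t=1$ always qualifies). If $s = t$ then $x^{(t)} \in E$; if $s < t$, then at step $t$ we did not add a new element, which means precisely that $\|x^{(t)} - x^{(s)}\|_2 \le \epsilon$, and $x^{(s)} \in E$.

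\textbf{Size bound.} Here the insertion-only hypothesis enters: for $s \le t$ we have $x^{(s)} \le x^{(t)}$ coordinatewise and $x^{(s)}, x^{(t)} \ge 0$, so $\langle x^{(s)}, x^{(t)} \rangle \ge \langle x^{(s)}, x^{(s)} \rangle = \|x^{(s)}\|_2^2$, and therefore
\[ \|x^{(t)} - x^{(s)}\|_2^2 \;=\; \|x^{(t)}\|_2^2 - 2\langle x^{(s)}, x^{(t)}\rangle + \|x^{(s)}\|_2^2 \;\le\; \|x^{(t)}\|_2^2 - \|x^{(s)}\|_2^2 . \]
Let $\phi(t) := \|x^{(t)}\|_2^2$; by the display it is nondecreasing in $t$ and lies in $[0,\|x^{(m)}\|_2^2]$. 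Whenever we add $x^{(t)}$ while the previous net element was $x^{(s)}$, this happens because $\|x^{(t)} - x^{(s)}\|_2 > \epsilon$, and the same display gives $\phi(t) - \phi(s) \ge \|x^{(t)} - x^{(s)}\|_2^2 > \epsilon^2$. Hence every addition after the first increases $\phi$ by more than $\epsilon^2$, so the number of such additions is less than $\epsilon^{-2}\|x^{(m)}\|_2^2$ and $|E| \le 1 + \epsilon^{-2}\|x^{(m)}\|_2^2$. (In the intended application the stream is first rescaled so that $\|x^{(m)}\|_2 \le 1$, in which case $\|x^{(m)}\|_2^2 \le \|x^{(m)}\|_2$ and this is exactly the stated bound; I would state the lemma in the scale-free $\|x^{(m)}\|_2^2$ form and record this remark.)

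\textbf{Main point and where care is needed.} There is no heavy computation here; the whole content is the two monotonicity observations (monotonicity of $\phi$, and the inner-product inequality). The one place to be careful is that the greedy set must be an $\epsilon$-net for \emph{every} prefix $x^{(t)}$, not merely for the times at which a point was added — handled above by charging $x^{(t)}$ to its most recent addition. The insertion-only hypothesis is indispensable: for a general turnstile stream $\phi$ need not be monotone and the inner-product bound fails, and indeed the net can then be exponentially large, so this is exactly the step that would break.
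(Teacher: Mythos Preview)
Your proof is correct and is exactly the greedy scan the paper's proof sketch describes; the only cosmetic difference is that you compare against the most recently added net point rather than against all of $E$, which is equivalent here since coordinatewise monotonicity makes the most recent point the nearest. Your remark about $\|x^{(m)}\|_2$ versus $\|x^{(m)}\|_2^2$ is well taken: the paper's own use of the lemma (building $(\|\tB_{\eta,f^{(m)}}\|_F/2^\ell)$-nets of size $1+2^{2\ell}$) only balances with the bound $1+\epsilon^{-2}\|x^{(m)}\|_2^2$, so the stated exponent is a typo and your scale-free formulation is the right one.
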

\begin{proof}[Proof Sketch]
	The idea is to use a greedy algorithm, by scanning through the stream from the beginning and adding an element $x^{(t)}$ into the net if there does not already exist an element in the net that is $\epsilon$-close to $x^{(t)}$.
\end{proof}

\subsection{Concentration inequalities}
Our analysis crucially relies on the following Hanson-Wright inequality~\cite{hanson1971bound}.
\begin{lemma}[Hanson-Wright inequality~\cite{hanson1971bound}]\label{lem:HW}
For any symmetric $B\in\R^{n\times n}$, $\sigma\in\{\pm1\}^n$ being independent Rademacher vector, and integer $p\geq1$, we have
\begin{equation*}
\|\sigma^\top B\sigma-\E_\sigma[\sigma^\top B\sigma]\|_p\leq O\left(\sqrt{p}\|B\|_F+p\|B\|\right)=O(p\|B\|_F),
\end{equation*}
where $\|X\|_p$ is defined as $\E[|X|^p]^{1/p}$ and $\|\cdot\|_F$ is the Frobenius norm.
\end{lemma}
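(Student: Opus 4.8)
The plan is to prove the Hanson--Wright inequality by reducing the quadratic form to a linear one through decoupling and then recursing on the degree. First, since $\sigma_i^2=1$, the diagonal of $B$ contributes exactly $\mathrm{tr}(B)=\E_\sigma[\sigma^\top B\sigma]$, so the centered quantity is the off-diagonal bilinear form $Z:=\sigma^\top B\sigma-\E_\sigma[\sigma^\top B\sigma]=\sum_{i\neq j}B_{ij}\sigma_i\sigma_j$, a multilinear Rademacher polynomial of degree at most $2$. Already at this point the coarse bound $\|Z\|_p=O(p\|B\|_F)$ --- which is the only form used later in the paper --- follows from hypercontractivity of low-degree Rademacher polynomials ($\|f\|_p\le(p-1)^{\deg(f)/2}\|f\|_2$ for $p\ge 2$) together with the elementary identity $\|Z\|_2^2=\mathrm{Var}(\sigma^\top B\sigma)=2\sum_{i\neq j}B_{ij}^2\le 2\|B\|_F^2$. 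To obtain the sharper $O(\sqrt p\,\|B\|_F+p\|B\|)$ I would proceed as follows.

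By a standard decoupling inequality, with $\sigma'$ an independent Rademacher copy, $\|Z\|_p\le C\bigl\|\sum_{i,j}B_{ij}\sigma_i\sigma'_j\bigr\|_p$ up to an additive diagonal term $C\|\sum_i B_{ii}\sigma_i\sigma'_i\|_p$, which is a Rademacher sum in the i.i.d.\ signs $\sigma_i\sigma'_i$ and hence is $O(\sqrt p\,\|B\|_F)$ by Khintchine's inequality. Conditioning on $\sigma'$, the main term equals $\sigma^\top(B\sigma')$, a Rademacher sum in $\sigma$ with coefficient vector $B\sigma'$, so Khintchine gives $\bigl\|\sum_{i,j}B_{ij}\sigma_i\sigma'_j\bigr\|_p\le C\sqrt p\,\bigl\|\,\|B\sigma'\|_2\,\bigr\|_p$. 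Now $\|B\sigma'\|_2^2=(\sigma')^\top B^2\sigma'$ (using $B=B^\top$), whose mean is $\mathrm{tr}(B^2)=\|B\|_F^2$, so by the triangle inequality $\bigl\|\,\|B\sigma'\|_2\,\bigr\|_p^2=\|(\sigma')^\top B^2\sigma'\|_{p/2}\le\|B\|_F^2+\|(\sigma')^\top B^2\sigma'-\E\|_{p/2}$. Writing $M(p,B)$ for $\|Z\|_p$, chaining the last displays is precisely the recursion $M(p,B)\lesssim\sqrt p\,\sqrt{\|B\|_F^2+M(p/2,B^2)}+\sqrt p\,\|B\|_F$.

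It remains to unwind the recursion. Using $\|B^2\|_F\le\|B\|\cdot\|B\|_F$ and $\|B^2\|=\|B\|^2$, a strong induction on $p$ --- reducing $p$ to $\lceil p/2\rceil\le p-1$ and using monotonicity of $L^q$ norms to pass from $p/2$ to $\lceil p/2\rceil$, with base case $M(2,B)=\sqrt{\mathrm{Var}(\sigma^\top B\sigma)}\le\sqrt2\,\|B\|_F$ --- establishes $M(p,B)\le C(\sqrt p\,\|B\|_F+p\|B\|)$: the inductive step amounts to the elementary inequality $\|B\|_F^2+C(\sqrt p\,\|B\|\,\|B\|_F+p\|B\|^2)\le C(\|B\|_F+\sqrt p\,\|B\|)^2$, which holds once the absolute constant $C$ is taken large enough relative to the constants from decoupling and Khintchine. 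Finally, $\|B\|\le\|B\|_F$ and $\sqrt p\le p$ for $p\ge 1$ give $\sqrt p\,\|B\|_F+p\|B\|\le 2p\|B\|_F$, which is the simplified right-hand side in the statement.

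I expect the main obstacle to be bookkeeping rather than conceptual: making the self-referential recursion rigorous (choosing $C$ uniformly across all $p$, rounding the halved exponent up, and verifying that every diagonal/decoupling correction is genuinely lower order than $\sqrt p\,\|B\|_F$). If one prefers to avoid the recursion, the alternative is to prove the Gaussian analogue first --- diagonalizing $B$ turns $g^\top Bg-\E$ into $\sum_k\lambda_k(h_k^2-1)$ with $h_k$ i.i.d.\ standard normal, and a Rosenthal/Bernstein moment bound for sums of independent centered sub-exponential variables yields $\sqrt p\,\bigl(\sum_k\lambda_k^2\bigr)^{1/2}+p\max_k|\lambda_k|=\sqrt p\,\|B\|_F+p\|B\|$ directly --- and then transfer to the Rademacher case by writing $g_i=\sigma_i|g_i|$ and using Jensen's inequality to bound $\E_\sigma|\sum_{i\ne j}B_{ij}\sigma_i\sigma_j|^p$ by a constant multiple of $\E_g|\sum_{i\ne j}B_{ij}g_ig_j|^p$.
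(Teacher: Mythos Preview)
The paper does not prove this lemma at all: it is stated in the preliminaries with a citation to~\cite{hanson1971bound} and is used as a black box in the proof of Lemma~3.3. So there is no ``paper's own proof'' to compare against.

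Your proposal is correct. In fact, the only form of the inequality actually invoked later in the paper is the coarse bound $\|Z\|_p=O(p\|B\|_F)$, and your one-line derivation of that via hypercontractivity of degree-two Rademacher polynomials together with $\|Z\|_2^2=2\sum_{i\neq j}B_{ij}^2\le 2\|B\|_F^2$ is complete and sufficient for everything the paper does. Your route to the sharper $O(\sqrt p\,\|B\|_F+p\|B\|)$ via decoupling, Khintchine conditionally on $\sigma'$, and the recursion $M(p,B)\lesssim\sqrt p\,\sqrt{\|B\|_F^2+M(p/2,B^2)}$ is also a standard and valid argument; the bookkeeping you flag (rounding $p/2$ up, choosing the constant uniformly) is routine. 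The Gaussian-first alternative you sketch is likewise standard. Either way, you are supplying a proof where the paper simply cites the literature.
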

Note that the only randomness in $\sigma^\top B\sigma-\E_\sigma[\sigma^\top B\sigma]$ is the Rademacher vector $\sigma$.

\section{\textsf{CountSketch} with $O(\epsilon^{-2})$ rows provides $\ell_2$ weak tracking}\label{sec:cs weak tracking}
In this section we will show that \textsf{CountSketch} with $O(\epsilon^{-2})$ rows provides $(\eps, O(1))$-weak tracking.

\begin{theorem}[\textsf{CountSketch} with $O(\epsilon^{-2})$ rows provides $\ell_2$ weak tracking]\label{thm:CS weak}
For any $\epsilon>0$, $\delta\in(0,1)$, and $n\in\N$. Pick $k=\Omega(\epsilon^{-2}\delta^{-1})$. For any insertion-only stream over $[n]$ with frequency $f^{(1)},f^{(2)},\dots,f^{(m)}$, the \textsf{CountSketch} algorithm with $k$ rows provides $\ell_2$ $(\epsilon,\delta)$-weak tracking.
\end{theorem}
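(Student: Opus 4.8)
The plan is to follow the $\epsilon$-net plus chaining strategy used for \textsf{AMS} in~\cite{braverman2017bptree}, but to exploit the sparse structure of \textsf{CountSketch} to control the chaining increments. Fix the sketching matrix $\Pi\in\R^{k\times n}$ of \textsf{CountSketch}, and write $\|\Pi x\|_2^2 = \sigma^\top A(x)\sigma$ where $\sigma$ is the Rademacher sign vector and $A(x)$ is a symmetric matrix built from the hash assignment $h:[n]\to[k]$ applied to $x$; in fact for a fixed hash $h$ one has $\|\Pi x\|_2^2 - \|x\|_2^2 = \sum_{i\neq j: h(i)=h(j)} \sigma_i\sigma_j x_i x_j$, a chaos of order 2 in $\sigma$. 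By Hanson-Wright (Lemma~\ref{lem:HW}), conditioned on $h$, for any fixed $x$ the deviation $D_x \triangleq \|\Pi x\|_2^2 - \|x\|_2^2$ satisfies $\|D_x\|_p \lesssim p\,\|A(x)-\mathrm{diag}\|_F$, and the Frobenius norm of the off-diagonal collision matrix is small in expectation over $h$ (it is $O(\|x\|_2^2/\sqrt k)$ in $p=2$, matching Lemma~\ref{lem:CS estimation}). The first step is to record such a moment/tail bound for a single fixed vector, for both a typical $h$ and also a high-probability statement over $h$.

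Second, I would invoke Lemma~\ref{lem:eps net insertion-only stream}: there is an $\epsilon'$-net $E$ for the stream $\{f^{(t)}\}$ of size $1 + \epsilon'^{-2}\|f^{(m)}\|_2$, with net points drawn from the stream itself. Set $\epsilon' $ to a suitable polynomial multiple of $\epsilon\|f^{(m)}\|_2$ so that the net error is negligible and $\log|E| = O(\log(\|f^{(m)}\|_2/\epsilon)) = O(\log m)$ since frequencies are bounded by $m$. A union bound over the (at most $m$) net points, using the tail bound from step one with $p = \Theta(\log(m/\delta))$, would naively cost $k \gtrsim \epsilon^{-2}\log(m/\delta)$ rows — too many. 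The fix, exactly as in~\cite{braverman2017bptree}, is to bound $\sup_t |D_{f^{(t)}}|$ directly via a chaining decomposition: write each $f^{(t)}$ as a telescoping sum along the net at geometric scales, so that $\sup_t |D_{f^{(t)}}| \leq \sum_{\ell} \sup |D_{u-v}|$ where $u,v$ range over net points at consecutive scales with $\|u-v\|_2 \leq 2^{-\ell}\|f^{(m)}\|_2$; the number of such pairs at scale $\ell$ is controlled by the net size at that scale, which shrinks, so the union bound cost per scale stays $O(\epsilon^{-2})$ and the scales sum geometrically.

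The crux — and the main obstacle — is that the chaining increment $D_{u-v}$ for \textsf{CountSketch} is a \emph{sparse} quadratic form, and its Frobenius norm $\|A(u-v)\|_F$ does not concentrate as tightly as in the dense \textsf{AMS} case: a single hash collision among the heavy coordinates of $u-v$ can make one entry of the collision matrix as large as $\|u-v\|_\infty^2$, so $\|A(u-v)\|_F$ can be as large as $\|u-v\|_\infty^2 \gg \|u-v\|_2^2/\sqrt k$ with probability only $\sim 1/k$ per pair. So the plan is to split the analysis of each increment into a "light" part (coordinates of $u-v$ small relative to $\|u-v\|_2$) for which the expected $\|A\|_F$ is genuinely $O(\|u-v\|_2^2/\sqrt k)$ and Hanson–Wright with $p=\Theta(\log(\cdot))$ suffices, and a "heavy" part (few large coordinates), for which we argue that with probability $1-\delta/\mathrm{poly}$ over $h$ these heavy coordinates land in distinct buckets (a birthday-type bound, affordable since there are few of them and $k \gg \epsilon^{-2}$), and when they do, their contribution to $D$ vanishes entirely or is itself a controllable light-type chaos. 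Because the net points all come from the stream, "heavy coordinates of $u - v$" for net points at a given scale can be charged to the structure of the stream, keeping the number of bad events small. Combining the light-part chaining bound with the (rare) heavy-part failure event, taking $k = \Omega(\epsilon^{-2}\delta^{-1})$ (the $\delta^{-1}$, rather than $\log(1/\delta)$, absorbing the union bound over the polynomially-many net pairs and heavy-collision events), and finally adding back the $O(\epsilon')$ net discretization error, yields $\Pr[\sup_t |D_{f^{(t)}}| > \epsilon\|f^{(m)}\|_2^2] \leq \delta$, which is exactly $(\epsilon,\delta)$-weak tracking. The $O(\log(1/\delta))$ update time and $O(\epsilon^{-2}\log(1/\delta))$ words in Theorem~\ref{thm:informal} then follow by running $O(\log(1/\delta))$ independent copies of this constant-update-time, $O(\epsilon^{-2})$-row sketch and taking the median, with weak tracking preserved under the median since it is a one-sided "for all $t$" event whose complement is an interval condition on each copy.
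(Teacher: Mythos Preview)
Your high-level plan is sound --- quadratic form in $\sigma$, Hanson--Wright conditioned on the hash, chaining over an $\eps$-net --- and you correctly flag that the Frobenius norm of the collision matrix for \textsf{CountSketch} does not concentrate on increments the way it does for \textsf{AMS}. But the specific chaining you set up has a gap, and your proposed heavy/light repair targets the wrong quantity.

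The issue is that $D_x = \sigma^\top A(x)\sigma$ is \emph{quadratic} in $x$, so telescoping $f^{(t)}$ along a chain of vector net points $u_0,u_1,\dots$ does \emph{not} give $D_{f^{(t)}} = \sum_\ell D_{u_\ell - u_{\ell-1}}$; the correct increment is $D_{u_\ell}-D_{u_{\ell-1}} = \sigma^\top\bigl(A(u_\ell)-A(u_{\ell-1})\bigr)\sigma$, which by Hanson--Wright is governed by $\|A(u_\ell)-A(u_{\ell-1})\|_F$, not by $\|A(u_\ell-u_{\ell-1})\|_F$. Since $A(u)_{ij}-A(v)_{ij} = (u_iu_j-v_iv_j)\mathbf{1}[h(i)=h(j)]$, this Frobenius norm is \emph{not} controlled by $\|u-v\|_2$ alone --- it also sees the large coordinates of $u$ and $v$ themselves. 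Your heavy/light split is applied to the coordinates of $u-v$ and to the matrix $A(u-v)$, so it bounds the wrong object; even if patched, it would require a birthday/union bound over all heavy coordinates of all net points at all scales, which is delicate and not obviously affordable with $k=O(\eps^{-2}\delta^{-1})$.

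The paper's fix is cleaner and avoids all of this: after conditioning on the hash $\eta$, it observes that the \emph{matrices} $\{\tB_{\eta,f^{(t)}}\}_{t}$ themselves form an insertion-only stream (each entry $f^{(t)}_i f^{(t)}_j\cdot\mathbf{1}[h(i)=h(j)]$ is nondecreasing in $t$), so \autoref{lem:eps net insertion-only stream} applies directly in Frobenius norm to the matrices rather than to the vectors. The chaining increment is then literally $\|B^{(t)}_{\eta,\ell}-B^{(t)}_{\eta,\ell-1}\|_F \le O(2^{-\ell})\|\tB_{\eta,f^{(m)}}\|_F$ by construction of the nets, Hanson--Wright plugs in with no further concentration over $\eta$ needed per level, and one obtains $\E\bigl[\sup_t \gamma(f^{(t)})\,\big|\,\eta\bigr]=O(\|\tB_{\eta,f^{(m)}}\|_F)$. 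Only at the very end is the randomness in $\eta$ spent, via a single Markov bound $\E_\eta\|\tB_{\eta,f^{(m)}}\|_F^2\le\|f^{(m)}\|_2^4/k$, which is exactly where the $\delta^{-1}$ in $k=\Omega(\eps^{-2}\delta^{-1})$ comes from. So the missing idea in your proposal is: net the collision matrices, not the frequency vectors.
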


\begin{remark*}
Note that for linear sketches, the dependency of number of rows on $\epsilon$ is tight in~\autoref{thm:CS weak}.
This is implied by known lower-bounds on Distributional JL~\cite{KMN11, JW13}, which imply lower-bounds on one-shot $\ell_2$ approximation.
\end{remark*}

\begin{remark*}
Recall that the number of rows in linear sketches is proportional to the number of words needed in the algorithm.
\end{remark*}

Using the standard median trick, we can run $O(\log(1/\delta))$ copies of \textsf{CountSketch} with $k=O(\epsilon^{-2})$ in parallel and output the median. With this, ~\autoref{thm:CS weak} immediately gives the following corollary with better dependency on $\delta$.
\begin{corollary}
For any $\epsilon>0$, $\delta\in(0,1)$, and $n\in\N$. For any insertion-only stream over $[n]$ with frequency $f^{(1)},f^{(2)},\dots,f^{(m)}$, there exists a  streaming algorithm providing $\ell_2$ $(\epsilon,\delta)$-weak tracking with $k=O(\epsilon^{-2}\log(1/\delta))$ rows and update time $O(\log(1/\delta))$.
\end{corollary}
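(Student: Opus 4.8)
The plan is to prove \autoref{thm:CS weak}, the statement that \textsf{CountSketch} with $k = \Omega(\epsilon^{-2}\delta^{-1})$ rows provides $\ell_2$ $(\epsilon,\delta)$-weak tracking. By rescaling, it suffices to assume $\|f^{(m)}\|_2 = 1$, so that we must control the deviation $\bigl|\|\Pi f^{(t)}\|_2^2 - \|f^{(t)}\|_2^2\bigr|$ uniformly over all $t\in[m]$ by $\epsilon$. First I would reduce the uniform-over-$t$ statement to a uniform-over-a-net statement using \autoref{lem:eps net insertion-only stream}: there is an $\eps'$-net $E \subseteq \{f^{(t)}\}_{t\in[m]}$ of size $O(1 + \epsilon'^{-2})$ (choosing $\eps'$ a suitable polynomial in $\epsilon$), with net points drawn from the stream itself. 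For a vector $f^{(t)}$ that is $\eps'$-close to a net point $y$, I would write $f^{(t)} = y + (f^{(t)} - y)$ and expand $\|\Pi f^{(t)}\|_2^2$ bilinearly; since $\|f^{(t)} - y\|_2 \le \eps'$ and (crucially, for insertion-only streams) $\|f^{(t)}\|_2 \le \|f^{(m)}\|_2 = 1$, the cross terms and the error term are controlled provided $\|\Pi\|_{op}$ on the relevant low-dimensional subspace is $O(1)$ — this in turn follows from the quadratic-form concentration applied to $f^{(t)}-y$, which itself equals a difference of two stream vectors appearing consecutively, so the same net/chaining machinery bounds it.

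The heart of the argument is a uniform bound over the net: $\Pr\bigl[\exists y \in E : \bigl|\|\Pi y\|_2^2 - \|y\|_2^2\bigr| > \epsilon/2\bigr] \le \delta/2$. The natural first attempt is a union bound over $|E| = O(\epsilon'^{-2})$ net points, each handled by a tail bound on $\|\Pi y\|_2^2 - \|y\|_2^2$. Writing $\Pi$ in terms of its hash function $h:[n]\to[k]$ and sign function $\sigma:[n]\to\{\pm1\}$, we have $\|\Pi y\|_2^2 = \sum_i y_i^2 + \sum_{i\ne j}\sigma_i\sigma_j\mathbf{1}[h(i)=h(j)]\,y_iy_j$, so the deviation is a Rademacher quadratic form $\sigma^\top B_y \sigma$ (after also accounting for the randomness of $h$) with $B_y$ the off-diagonal matrix of entries $\mathbf{1}[h(i)=h(j)]y_iy_j$. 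I would apply the Hanson-Wright inequality (\autoref{lem:HW}) conditioned on $h$: since $\|B_y\|_F^2 = \sum_{i\ne j}\mathbf{1}[h(i)=h(j)]y_i^2y_j^2$ has expectation $\le \|y\|_2^4/k$ over $h$, for $k = \Omega(\epsilon^{-2}\delta^{-1})$ we get $\|B_y\|_F \lesssim \epsilon\|y\|_2^2$ with high enough probability, and then Hanson-Wright with a suitable moment $p = \Theta(\log(|E|/\delta))$ gives the per-point tail. The operator norm term $p\|B_y\|$ must also be checked to be lower order, using $\|B_y\| \le \|B_y\|_F$. A union bound over $E$ then costs only a $\log|E| = O(\log(1/\epsilon))$ factor in the moment $p$, which is absorbed.

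I expect the main obstacle to be exactly the point flagged in the introduction and in \autoref{rmk:differences}: because \textsf{CountSketch} has a \emph{sparse} sketching matrix, the Frobenius norm $\|B_y\|_F$ and operator norm $\|B_y\|$ of the quadratic form depend on the random hash $h$ in a way that can be large for adversarial $y$ (e.g. if two heavy coordinates of $y$ collide under $h$). A naive application of chaining/Hanson-Wright as in \cite{braverman2017bptree} for the dense \textsf{AMS} matrix does not go through directly, so the fix is to split each net vector into a "heavy" part (the $O(\epsilon^{-2})$ largest coordinates, which with good probability over $h$ are perfectly hashed, i.e. land in distinct buckets, contributing zero to the off-diagonal form) and a "light" part (whose coordinates are individually small, so $\|B_y\|$ and $\|B_y\|_F$ are well-behaved and the quadratic form concentrates). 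Carefully arguing the perfect-hashing event for the heavy coordinates — and that the light-heavy cross terms are also controlled — is the technical core; once that structural decomposition is in place, the remaining steps (Hanson-Wright, union bound over the net, net-to-stream reduction, and the final median-of-$O(\log(1/\delta))$-copies boosting for the corollary) are routine. For the corollary itself, running $R = O(\log(1/\delta))$ independent copies each with $k = O(\epsilon^{-2})$ rows (so failure probability $O(1)$ per copy for the weak-tracking event) and taking the median works because the weak-tracking failure event is a single event over the whole stream, so a Chernoff bound over the $R$ copies drives its probability below $\delta$; this yields $O(\epsilon^{-2}\log(1/\delta))$ rows and $O(\log(1/\delta))$ update time since each \textsf{CountSketch} copy has $O(1)$ update time.
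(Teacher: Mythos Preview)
Your derivation of the corollary from \autoref{thm:CS weak} via the median trick is correct and matches the paper. The gap is in your proposed proof of \autoref{thm:CS weak} itself.

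The heavy/light split does not work as stated. With $O(\epsilon^{-2})$ heavy coordinates and $k=\Theta(\epsilon^{-2}\delta^{-1})$ buckets, the expected number of heavy--heavy collisions under $h$ is $\binom{\Theta(\epsilon^{-2})}{2}/k=\Theta(\epsilon^{-2}\delta)$, which is not $\le\delta$ (indeed it diverges as $\epsilon\to 0$); so ``perfect hashing of the heavy part with probability $1-O(\delta)$'' fails by the birthday bound. Even setting this aside, you would need the good-hash event simultaneously for all $|E|=\Theta(\epsilon^{-2})$ net points, whose heavy sets differ, and you only have Markov-type (polynomial in $\delta$) control on $\|B_y\|_F$; a union bound over $E$ would force $k=\Omega(\epsilon^{-4}\delta^{-1})$. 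Separately, your net-to-stream reduction via bilinear expansion of $\|\Pi f^{(t)}\|_2^2$ around a net point $y$ leaves the terms $\|\Pi(f^{(t)}-y)\|_2^2$ and $\langle\Pi y,\Pi(f^{(t)}-y)\rangle$, which must be controlled \emph{uniformly} over $t$; appealing to ``the same net/chaining machinery'' here is circular, because $\gamma(\cdot)$ is quadratic in the vector argument and does not telescope along a vector chain.

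The paper's fix, which is exactly the point of \autoref{rmk:differences}, is to chain not on the vectors $f^{(t)}$ but on the \emph{matrices} $\tB_{\eta,f^{(t)}}$ after conditioning on $\eta$. Since $\gamma(B)=\sigma^\top B\sigma$ is \emph{linear} in $B$, the chain telescopes exactly, and Hanson--Wright (\autoref{lem:HW}) bounds each link by the Frobenius norm of the matrix increment. Because the stream is insertion-only, the entries of $\tB_{\eta,f^{(t)}}$ are nonnegative and monotone in $t$, so \autoref{lem:eps net insertion-only stream} applies to this matrix sequence and the radii are all multiples of the single quantity $\|\tB_{\eta,f^{(m)}}\|_F$. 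This yields $\E[\sup_t\gamma(f^{(t)})\mid\eta]=O(\|\tB_{\eta,f^{(m)}}\|_F)$ deterministically in $\eta$ (\autoref{lem:bound error with Frobenius}), and then a \emph{single} Markov bound over $\eta$ on $\|\tB_{\eta,f^{(m)}}\|_F^2$ (\autoref{lem:concentration Frobenius}) finishes, with no union bound over net points and no heavy/light decomposition needed.
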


The proof of~\autoref{thm:CS weak} uses the Dudley-like chaining technique
similar to other tracking proofs~\cite{braverman2017bptree}. However, direct application of the chaining argument would not suffice and we have to utilize the structure of the sketching matrix of \textsf{CountSketch} (see~\autoref{rmk:differences} for comparison). We will prove~\autoref{thm:CS weak} in~\autoref{sec:proof of CS weak}.

\subsection{Proof of~\autoref{thm:CS weak}}\label{sec:proof of CS weak}
In this subsection, we give a formal proof for our main theorem.
Let us start with some notations for \textsf{CountSketch}. Recall that for any $i\in[n]$, the $i^{\text{th}}$ column of $\Pi$ is defined by (i) picking $j\in[k]$ uniformly and set $\Pi_{j,i}$ to be a Rademacher random variable and (ii) set the other entries in $\Pi_i$ to be 0. Denote $\Pi_{j,i} = \sigma_{j,i}\eta_{j,i}$, where $\sigma_{j,i}$ is a Rademacher random variable, and $\eta_{j,i}$ is the indicator for choosing the $j^{\text{th}}$ row in the $i^{\text{th}}$ column. Note that there is exactly one non-zero entry in each column and the probability distribution is uniform. The approximation error of $\Pi$ for a vector $\bx\in\R^n$ is denoted as $\gamma(\bx):=\left|\|\Pi \bx\|_2^2-\|\bx\|_2^2\right|$. To show weak tracking, it suffices to upper bound the supremum of $\gamma(f^{(t)})$.
\begin{equation}\label{eq:supremum error}
\E_{\Pi}\sup_{t\in[m]}\gamma(f^{(t)}) = \E_{\Pi}\sup_{t\in[m]}\Bigl|\|\Pi f^{(t)}\|_2^2-\|f^{(t)}\|_2^2\Bigr|.
\end{equation}

The first observation\footnote{Note that the matrix $\tB_{\bx}$ we are using is different from the matrix used in the previous analysis of~\cite{braverman2017bptree}. This difference is crucial since the matrix of~\cite{braverman2017bptree} does not work for \textsf{CountSketch}.} is that one can rewrite the error $\gamma(\bx)$ as follows.
\begin{equation*}
\gamma(\bx) = \left|\bx^\top\Pi^\top\Pi \bx-\bx^\top \bx\right| = \left|\sigma^\top B_{\eta,\bx}\sigma - \bx^\top \bx\right|=\left| \sigma^\top\tB_{\eta,\bx}\sigma\right|,
\end{equation*}
where $\sigma\in\{-1,1\}^{n}$ is an independent Rademacher random vector and for any $i,i'\in[n]$,
\begin{equation*}
(\tB_{\eta,\bx})_{i,i'}=\left\{\begin{array}{ll}
\bx_i\bx_{i'},&\text{ $i\neq i'$ and }\exists j\in[k],\ \eta_{j,i}=\eta_{j,i'}=1\\
0,&\text{ else.}
\end{array}
\right.
\end{equation*}
Note that the diagonals of $\tB_{\eta,\bx}$ are all zero as follow.
\[\tB_{\eta,\bx} = \begin{pmatrix}
0&\bx_1\bx_2\langle\Pi_1,\Pi_2\rangle&\cdots&\bx_1\bx_n\langle\Pi_1,\Pi_n\rangle\\
\bx_2\bx_1\langle\Pi_2,\Pi_1\rangle&0&\cdots&\bx_2\bx_n\langle\Pi_2,\Pi_n\rangle\\
\vdots&\vdots&\ddots&\vdots\\
\bx_n\bx_1\langle\Pi_n,\Pi_1\rangle&\bx_n\bx_2\langle\Pi_n,\Pi_2\rangle&\cdots&0
\end{pmatrix}.
\]
For convenience, for any matrix $B\in\R^{n\times n}$, we overload the notation $\gamma$ by denoting $\gamma(B)=\sigma^\top B\sigma$. That is, $\gamma(\tB_{\eta,\bx})=\gamma(\bx)$. One benefit of writing $\ell_2$ weak tracking error into the above quadratic form is that Hanson-Wright inequality (see~\autoref{lem:HW}) is now applicable.

The lemma below shows that the expectation of the weak tracking error is upper bounded by the Frobenius norm of $\tB_{\eta,f^{(m)}}$.

\begin{lemma}\label{lem:bound error with Frobenius}
Let $\{f^{(t)}\}_{t\in[m]}$ be the frequencies of an insertion-only stream. We have
\begin{equation*}
\E\left[\sup_{t\in[m]}\gamma(f^{(t)})\ |\ \eta\right] = O(\|\tB_{\eta,f^{(m)}}\|_F).
\end{equation*}
\end{lemma}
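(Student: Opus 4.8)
The plan is to condition on $\eta$ and run a Dudley-type chaining argument for the mean-zero second-order Rademacher chaos process $Z_t := \sigma^\top \tB_{\eta,f^{(t)}}\sigma = \gamma(f^{(t)})$ indexed by the stream $T := \{f^{(t)}\}_{t\in[m]}$. Once $\eta$ is fixed, $\tB_{\eta,f^{(t)}}$ is a deterministic matrix with zero diagonal, so $\E_\sigma Z_t = \operatorname{tr}(\tB_{\eta,f^{(t)}}) = 0$. First I would peel off a base point $t_0$: $\E_\sigma \sup_t |Z_t| \le \E_\sigma |Z_{t_0}| + \E_\sigma \sup_t |Z_t - Z_{t_0}|$. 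The first term is $O(\|\tB_{\eta,f^{(t_0)}}\|_F)$ by \autoref{lem:HW} with $p=2$, and since the stream is insertion-only, every entry of $\tB_{\eta,f^{(t)}}$ is dominated in absolute value by the corresponding entry of $\tB_{\eta,f^{(m)}}$ (as $0\le f^{(t)}_i\le f^{(m)}_i$), so $\|\tB_{\eta,f^{(t_0)}}\|_F \le \|\tB_{\eta,f^{(m)}}\|_F$. It remains to control the supremum of the increments.

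For the increment term I would use \autoref{lem:HW} to deduce that $Z_t - Z_s = \sigma^\top(\tB_{\eta,f^{(t)}}-\tB_{\eta,f^{(s)}})\sigma$ enjoys the usual mixed sub-Gaussian/sub-exponential tail governed by the two scales $d_F(s,t) := \|\tB_{\eta,f^{(t)}}-\tB_{\eta,f^{(s)}}\|_F$ and $d_{\mathrm{op}}(s,t) := \|\tB_{\eta,f^{(t)}}-\tB_{\eta,f^{(s)}}\|$, and then invoke the standard chaining bound for such processes, giving $\E_\sigma\sup_t|Z_t - Z_{t_0}| \lesssim \int_0^\infty \sqrt{\log N(T,d_F,u)}\,du + \int_0^\infty \log N(T,d_{\mathrm{op}},u)\,du$, where $N(\cdot)$ denotes covering numbers. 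The task thus reduces to bounding these two entropy integrals by $O(\|\tB_{\eta,f^{(m)}}\|_F)$.

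To bound the covering numbers I would combine two ingredients. First, \autoref{lem:eps net insertion-only stream} supplies, for each scale $\rho>0$, an $\ell_2$ $\rho$-net of $\{f^{(t)}\}$ of size at most $1+\rho^{-2}\|f^{(m)}\|_2$ whose points lie on the stream; together with the insertion-only telescoping $\|f^{(t)}-f^{(s)}\|_2^2 \le \|f^{(t)}\|_2^2 - \|f^{(s)}\|_2^2$ for $s\le t$, a geometric family of such nets can be fed into the chaining sums. Second --- and this is the \textsf{CountSketch}-specific step --- I would translate $\ell_2$-closeness $\|f^{(t)}-f^{(s)}\|_2 \le \rho$ into Frobenius- and operator-norm closeness of $\tB_{\eta,f^{(t)}}$ and $\tB_{\eta,f^{(s)}}$, crucially using that all these matrices are supported on the \emph{same} collision graph $G_\eta = \{(i,i'): i\neq i',\ \exists j\ \eta_{j,i}=\eta_{j,i'}=1\}$ determined only by $\eta$, and that $f^{(s)}$ and $f^{(t)}-f^{(s)}$ are coordinatewise dominated by $f^{(m)}$, so that $\tB_{\eta,f^{(t)}}-\tB_{\eta,f^{(s)}}$ splits as a bounded combination of a $G_\eta$-masked ``large-frequency times deviation'' part and a quadratically small part. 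Bounding the norms of these masked matrices by quantities proportional to $\|\tB_{\eta,f^{(m)}}\|_F$ then collapses both entropy integrals to the desired $O(\|\tB_{\eta,f^{(m)}}\|_F)$.

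The main obstacle is exactly this last point. The naive estimate $\|\tB_{\eta,f^{(t)}}-\tB_{\eta,f^{(s)}}\|_F \le \|f^{(m)}\|_2\cdot\|f^{(t)}-f^{(s)}\|_2$ discards the sparsity of the sketch --- the factor $k^{-1/2}$ hidden in $\|\tB_{\eta,f^{(m)}}\|_F \approx \|f^{(m)}\|_2^2/\sqrt k$ --- and inflates the chaining integral by roughly $\sqrt k$ (plus spurious logarithmic factors), which is far too lossy. The heart of the proof is to exploit the combinatorial structure of $G_\eta$ to show that the ``local'' Frobenius and operator norms encountered along the stream are genuinely governed by $\|\tB_{\eta,f^{(m)}}\|_F$ rather than by $\|f^{(m)}\|_2^2$, uniformly over $t$ and over every choice of $\eta$; this is precisely what forces the analysis to depart from the \textsf{AMS} argument of \cite{braverman2017bptree}, where the relevant matrix is dense and this subtlety does not arise.
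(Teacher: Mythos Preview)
Your overall shape---condition on $\eta$, view $\gamma(f^{(t)})=\sigma^\top\tB_{\eta,f^{(t)}}\sigma$ as a mean-zero Rademacher chaos, and run a Dudley-type chaining controlled by Hanson--Wright---matches the paper. The gap is exactly where you place it: you build nets on the frequency vectors $\{f^{(t)}\}$ in $\ell_2$ and then try to ``translate'' $\ell_2$-closeness of $f^{(s)},f^{(t)}$ into Frobenius/operator closeness of $\tB_{\eta,f^{(s)}},\tB_{\eta,f^{(t)}}$ with the correct dependence on $\|\tB_{\eta,f^{(m)}}\|_F$. This translation does not go through pointwise in $\eta$. Writing $z=f^{(t)}-f^{(s)}\ge 0$ and $y=f^{(s)}$, the $(i,i')$ entry of the matrix difference on $G_\eta$ is $z_i y_{i'}+y_i z_{i'}+z_i z_{i'}$; even with $\|z\|_2$ tiny, if $z$ puts its mass on a coordinate $i$ that collides (under this particular $\eta$) with many heavy coordinates, the Frobenius norm of the difference can be of the same order as $\|\tB_{\eta,f^{(m)}}\|_F$. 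So there is no uniform-in-$\eta$ Lipschitz bound of the form $d_F(s,t)\lesssim \rho\cdot(\text{const})$ for $\|f^{(t)}-f^{(s)}\|_2\le\rho$, and your entropy integrals in $d_F$ cannot be controlled from $\ell_2$-nets of the stream alone. Your final paragraph names the obstacle but does not supply an argument that overcomes it.

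The paper sidesteps this translation entirely, and this is the one idea you are missing. Once $\eta$ is fixed, the sequence of matrices $\{\tB_{\eta,f^{(t)}}\}_{t\in[m]}$ is itself coordinatewise nondecreasing with nonnegative entries (each entry is $f^{(t)}_i f^{(t)}_{i'}$ on a fixed mask), so the greedy $\eps$-net construction of \autoref{lem:eps net insertion-only stream} applies \emph{directly to the matrices}, viewed as vectors under the Frobenius norm. At scale $\|\tB_{\eta,f^{(m)}}\|_F/2^\ell$ this yields a net $T_{\eta,\ell}\subset\{\tB_{\eta,f^{(t)}}\}$ of size at most $1+2^{2\ell}$, with net points themselves of the form $\tB_{\eta,f^{(t')}}$. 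Chaining with these nets gives increments $B^{(t)}_{\eta,\ell}-B^{(t)}_{\eta,\ell-1}$ whose Frobenius norms are already $\le O(2^{-\ell}\|\tB_{\eta,f^{(m)}}\|_F)$ by construction; Hanson--Wright plus a Markov/union-bound tail integration then sums to $O(\|\tB_{\eta,f^{(m)}}\|_F)$ (and the base level is the zero matrix, so no separate base-point term is needed). In short: do not chain in the $\ell_2$ metric on $f^{(t)}$ and translate; chain directly in the Frobenius metric on $\tB_{\eta,f^{(t)}}$, which is legitimate because the masked matrix sequence is itself monotone once $\eta$ is fixed.
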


The proof of~\autoref{lem:bound error with Frobenius} uses the Dudley-like chaining argument. For the smooth of presentation, we postpone the details to~\autoref{sec:proof of key lemma}.
Next, the following lemma shows that for any vector $x\in\R^n$, with high probability, $\|\tB_{\eta,x}\|_F=O(\|x\|_2^2/\sqrt{k})$.

\begin{lemma}\label{lem:concentration Frobenius}
For any $\delta\in(0,1)$ and $x\in\R^n$,
\begin{equation*}
    \Pr\left[\|\tB_{\eta,x}\|_F> \frac{\sqrt{2}\|x\|_2^2}{\sqrt{\delta\cdot k}}\right]\leq\frac{\delta}{2}.
\end{equation*}
\end{lemma}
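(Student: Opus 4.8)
The plan is to bound $\E[\|\tB_{\eta,x}\|_F^2]$ and then apply Markov's inequality. Writing out the Frobenius norm squared, we have
\begin{equation*}
\|\tB_{\eta,x}\|_F^2 = \sum_{i\neq i'} x_i^2 x_{i'}^2 \cdot \mathbf{1}\bigl[\exists j\in[k]:\ \eta_{j,i}=\eta_{j,i'}=1\bigr],
\end{equation*}
since the $(i,i')$ entry of $\tB_{\eta,x}$ is $x_i x_{i'}\langle \Pi_i,\Pi_{i'}\rangle$ whose square is $x_i^2 x_{i'}^2$ whenever $i,i'$ collide (land in the same row) and is $0$ otherwise (the sign does not matter once we square). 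Taking expectation over $\eta$ and using the fact that two distinct columns $i\neq i'$ collide with probability exactly $1/k$ (the second column lands in the same row as the first, regardless of independence assumptions, as long as the hash is pairwise independent), we get
\begin{equation*}
\E_\eta\bigl[\|\tB_{\eta,x}\|_F^2\bigr] = \frac{1}{k}\sum_{i\neq i'} x_i^2 x_{i'}^2 \le \frac{1}{k}\Bigl(\sum_i x_i^2\Bigr)^2 = \frac{\|x\|_2^4}{k}.
\end{equation*}

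First I would record the collision probability computation carefully, noting that it only uses pairwise independence of the row-assignments $\eta$, so the bound holds for the $8$-wise independent implementation as well. Then, applying Markov's inequality to the nonnegative random variable $\|\tB_{\eta,x}\|_F^2$,
\begin{equation*}
\Pr\left[\|\tB_{\eta,x}\|_F^2 > \frac{2\|x\|_2^4}{\delta\cdot k}\right] \le \frac{\E_\eta[\|\tB_{\eta,x}\|_F^2]}{2\|x\|_2^4/(\delta k)} \le \frac{\|x\|_2^4/k}{2\|x\|_2^4/(\delta k)} = \frac{\delta}{2}.
\end{equation*}
Taking square roots inside the probability event gives exactly the claimed statement $\Pr[\|\tB_{\eta,x}\|_F > \sqrt{2}\|x\|_2^2/\sqrt{\delta k}] \le \delta/2$.

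There is essentially no hard part here; this is a one-line second-moment argument. The only things to be slightly careful about are: (i) confirming that squaring the entries kills the Rademacher signs $\sigma_{j,i}$ so that $\|\tB_{\eta,x}\|_F^2$ depends only on the collision pattern $\eta$ (indeed the statement conditions only on $\eta$ and asserts a probability over $\eta$); (ii) the collision probability is exactly $1/k$ and not merely at most $1/k$, though $\le 1/k$ is all we need; and (iii) the diagonal terms $i=i'$ are excluded, which only helps. The main (very minor) obstacle is just making sure the limited-independence version still gives the collision bound, which it does since we only ever evaluate the event on a pair of coordinates at a time.
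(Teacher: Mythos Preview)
Your proposal is correct and follows essentially the same route as the paper: compute $\E_\eta[\|\tB_{\eta,x}\|_F^2]=\frac{1}{k}\sum_{i\neq i'}x_i^2x_{i'}^2\le \|x\|_2^4/k$ using that the collision probability is $1/k$, then apply Markov's inequality. One small clarification: by the paper's formal definition the $(i,i')$ entry of $\tB_{\eta,x}$ is $x_i x_{i'}$ (not $x_i x_{i'}\langle\Pi_i,\Pi_{i'}\rangle$) when $i\neq i'$ collide, since the Rademacher signs are already factored out into the outer $\sigma^\top(\cdot)\sigma$; this makes your point (i) moot, though of course the squared Frobenius norm you compute is the same either way.
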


\autoref{lem:concentration Frobenius} has similar flavor as~\autoref{lem:CS estimation}. The proof can be found in~\autoref{sec:proof of key lemma}.
Finally, \autoref{thm:CS weak} is an immediate corollary of~\autoref{lem:bound error with Frobenius} and~\autoref{lem:concentration Frobenius}. Here we provide a proof for completeness.

\begin{proof}[Proof of~\autoref{thm:CS weak}]
Recall that to prove~\autoref{thm:CS weak}, it suffices to show that with probability at least $1-\delta$ over $\eta$, $\sup_{t\in[m]}\gamma(f^{(t)})\leq\eps$. From~\autoref{lem:bound error with Frobenius}, for a fixed $\eta$, we have $\Pr\left[\sup_{t\in[m]}\gamma(f^{(t)}) > C_1\|\tB_{\eta,f^{(m)}}\|_F\right]\leq\delta/2$ for some constant $C_1>0$. Next, from~\autoref{lem:concentration Frobenius}, we have $\|\tB_{\eta,f^{(m)}}\|_F\leq \|f^{(m)}\|_2^2\cdot k^{-1/2}\cdot\delta^{-1/2}$ with probability at least $1-\delta/2$ over the randomness in $\eta$ for some constant $C_2>0$. Pick $m\geq C_1C_2\cdot\epsilon^{-2}\cdot\delta^{-1}$, we have $\Pr\left[\sup_{t\in[m]}\gamma(f^{(t)})>\epsilon\|f^{(m)}\|_2^2\right]\leq\delta$ and complete the proof.
\end{proof}

\subsection{Proof of the two key lemmas}\label{sec:proof of key lemma}
In this subsection, we provide the proofs for~\autoref{lem:bound error with Frobenius} and~\autoref{lem:concentration Frobenius}. Let us start with~\autoref{lem:bound error with Frobenius} which shows that the tracking error can be upper bounded by the Frobenius norm of $\tB_{\eta,f^{(m)}}$.
\begin{proof}[Proof of~\autoref{lem:bound error with Frobenius}]
Recall that we define $\tB_{\eta,x}$ such that $\gamma(x)=\sigma^\top\tB_{\eta,x}\sigma$ where $\sigma$ is 8-wise independent Rademacher random vector. An important trick here is that we think of \textit{fixing}\footnote{We do this by conditioning on $\eta$.} $\eta$ in the following.

The starting point of chaining argument is constructing a sequence of $\epsilon$-nets with exponentially decreasing error for $\{\tB_{\eta,f^{(t)}}\}_{t\in[m]}$.
Note that here $\{\tB_{\eta,f^{(t)}}\}_{t\in[m]}$ are matrices but one can view it as a vector and apply~\autoref{lem:eps net insertion-only stream} where $\ell_2$ norm for a vector becomes Frobenius norm for a matrix. Namely, for any non-negative integer $\ell$, let $T_{\eta,\ell}$ be the $(\|\tB_{\eta,f^{(m)}}\|_F/2^\ell)$-net for $\{\tB_{\eta,f^{(t)}}\}_{t\in[m]}$ under Frobenius norm where $|T_{\eta,\ell}|\leq 1+2^{2\ell}$. Note that here we fixed $\eta$ first and then constructed the nets. Thus, for each $t\in[m]$, one can rewrite $\tB_{\eta,f^{(t)}}$ into a \textit{chain} as follows.
\begin{equation}\label{eq:dudley}
\tB_{\eta,f^{(t)}} = B_{\eta,0}^{(t)} + \sum_{\ell=1}^\infty B_{\eta,\ell}^{(t)}-B_{\eta,\ell-1}^{(t)},
\end{equation}
where $B_{\eta,\ell}^{(t)}\in T_{\eta,\ell}$ and $\|\tB_{\eta,f^{(t)}}-B_{\eta,\ell}^{(t)}\|_F\leq2^{-\ell}\cdot\|\tB_{\eta,f^{(m)}}\|_F$. Moreover, from~\autoref{eq:dudley} we have
\begin{equation}\label{eq:CS weak chaining}
\E\sup_{t\in[m]}\gamma(f^{(t)}) \leq \E\sup_{t\in[m]}\gamma(B_{\eta,0}^{(t)}) + \sum_{\ell=1}^\infty\E\sup_{t\in[m]}\gamma(B_{\eta,\ell}^{(t)}-B_{\eta,\ell-1}^{(t)}).
\end{equation}

To bound to first term of~\autoref{eq:CS weak chaining}, observe that $T_{\eta,0}=\{\tB_{\eta,f^{(1)}}\}$ where $\tB_{\eta,f^{(1)}}$ is the all zero matrix. Namely, the first term of~\autoref{eq:CS weak chaining} is zero. As for the second term of~\autoref{eq:CS weak chaining}, we apply the chaining argument as follows. For any positive integer $\ell$, denote $\mathcal{A}_{\ell}=\{B^{(t)}_{\eta,\ell}-B^{(t)}_{\eta,\ell-1}\}_{t\in[m]}$. Note that from the construction of $\epsilon$-net in~\autoref{lem:eps net insertion-only stream}, we have $|\mathcal{A}_{\ell}|\leq2|T_{\eta,\ell}|\leq2^{2\ell+2}$ by triangle inequality.

\begin{align}
\E\left[\sup_{t\in[m]}\gamma(B^{(t)}_{\eta,\ell}-B^{(t)}_{\eta,\ell-1})\right] &= \int_0^\infty\Pr\left[\sup_{A\in\mathcal{A}_\ell}\gamma(A)>u\right]du\nonumber\\
&\leq u_\ell^* + \int_{u_\ell^*}^\infty \Pr\left[\sup_{A\in\mathcal{A}_\ell}\gamma(A)>u\right]du,\label{eq:main proof 1}
\end{align}
where $u_\ell^*>0$ will be chosen later. For any $A\in\mathcal{A}_\ell$ and integer $p\geq2$, by Markov's inequality and Hanson-Wright inequality, we have

\begin{equation*}
\Pr[\gamma(A)>u] \leq \frac{\E[\gamma(A)^p]}{u^p} =\frac{\|\sigma^\top A\sigma\|_p^p}{u^p}\leq \frac{\left(C\cdot\sqrt{p}\|A\|_F+C\cdot p\|A\|\right)^p}{u^p}
\end{equation*}
for some constant $C>0$. Note that the randomness here is only in $\sigma$ and thus we can apply the Hanson-Wright inequality. Let \fix{$R_\ell=\sup_{A\in\mathcal{A}_\ell}\left(C\cdot\sqrt{p}\|A\|_F+C\cdot p\|A\|\right)\leq C'p\cdot\|\tB_{\eta,f^{(m)}}\|_F\cdot2^{-\ell}$}~\footnote{In the submission, we didn't treat $R_\ell$ differently for each $\ell$.} for some $C'>0$. The last inequality holds because of $\|\cdot\|\leq\|\cdot\|_F$ and the choice of $\eps$-net. Now, choose \fix{$u_\ell^*=2S_\ell\cdot R_\ell$} where $S_\ell$ will be decided later,~\autoref{eq:main proof 1} becomes

\begin{align}
\E\left[\sup_{t\in[m]}\gamma(B^{(t)}_{\eta,\ell}-B^{(t)}_{\eta,\ell-1})\right] &\leq u_\ell^* + \int_{u_\ell^*}^\infty |\mathcal{A}_\ell| \cdot \frac{\fix{R_\ell}^p}{u^p} du\label{eq:CS weak union}\\
&\leq 2S_\ell \fix{R_\ell} + |\mathcal{A}_\ell|\cdot\frac{\fix{R_\ell}^p}{(2S_\ell \fix{R_\ell})^{p-1}}\nonumber\\
&\leq 2S_\ell C'p\cdot\|\tB_{\eta,f^{(m)}}\|_F\fix{\cdot2^{-\ell}} + |\mathcal{A}_\ell|\cdot\frac{C'p\cdot\|\tB_{\eta,f^{(m)}}\|_F}{S_\ell^{p-1}}\fix{\cdot2^{-\ell}}\nonumber
\end{align}
where the second term of~\autoref{eq:CS weak union} is due to union bound. Now,~\autoref{eq:CS weak chaining} becomes

\begin{align}
\E\sup_{t\in[m]}\gamma(f^{(t)}) &\leq\sum_{\ell=1}^\infty 2S_\ell C'p\cdot\|\tB_{\eta,f^{(m)}}\|_F\fix{\cdot2^{-\ell}} + |\mathcal{A}_\ell|\cdot\frac{C'p\cdot\|\tB_{\eta,f^{(m)}}\|_F}{S_\ell^{p-1}}\fix{\cdot2^{-\ell}}\nonumber\\
&\leq\|\tB_{\eta,f^{(m)}}\|_F\cdot\left(\sum_{\ell=1}^\infty2C'pS_\ell\cdot2^{-\ell}+\frac{2^{\fix{\ell}}C'p}{S_\ell^{p-1}}\right).\label{eq:main proof 2}
\end{align}

Choose $S_\ell=2^{3\ell/4}$ and $p\geq4$, the summation term in~\autoref{eq:main proof 2} can thus be upper bounded by a constant. We conclude that
\begin{equation*}
\E\sup_{t\in[m]}\gamma(f^{(t)}) = O(\|\tB_{\eta,f^{(m)}}\|_F).
\end{equation*}
Note that this also means that 8-wise independence suffices and thus the sketching matrix can be efficiently stored (see~\autoref{sec:CS} for more details).
\end{proof}

Next, we prove~\autoref{lem:concentration Frobenius} which upper bounds the expectation of $\|\tB_{\eta,\bx}\|$ for any $\bx\in\R^n$.

\begin{proof}[Proof of~\autoref{lem:concentration Frobenius}]
We first show that $\E_{\eta}\|\tB_{\eta,x}\|_F^2\leq\frac{\|x\|_2^4}{k}$ and the lemma immediately holds due to Markov's inequality.

Let $\mathbf{1}_{ii'}$ be the indicator for whether there exists $j\in[k]$ such that $\eta_{ij}=\eta_{i'j}=1$. Note that for $i\neq i'$, $\E[\mathbf{1}_{ii'}]=1/k$ and the only randomness here is in $\eta$.
\begin{align*}
\E\|\tB_{\eta,x}\|_F^2 &= \E\sum_{i,i'\in[n]}(\tB_{\eta,x})_{i,i'}^2=\E\sum_{(i,i')\in[n]^2,\ i\neq i'}x_i^2x_{i'}^2\mathbf{1}_{ii'}\\
&=\frac{1}{k}\sum_{(i,i')\in[n]^2,\ i\neq i'}x_i^2x_{i'}^2\leq\frac{\|x\|_2^4}{k},
\end{align*}
where the last inequality is by Cauchy-Schwarz. Note that 8-wise independence is sufficient in the above argument.
\end{proof}

\begin{remark}\label{rmk:differences}
Here, let us briefly compare the difference between our techniques and that in~\cite{braverman2017bptree}. There are two key observations on the structure of the sketching matrix of \textsf{CountSketch}.
First, we observe that the Frobenius norm of $\Pi^\top\Pi$ is dominated by its diagonal and thus \textit{removing} the diagonal would give us a more accurate analysis on the contribution from the off-diagonal term. However, removing the diagonal of $\Pi^\top\Pi$ destroys the symmetric structure and thus the standard $\eps$-net argument (e.g., in~\cite{braverman2017bptree}) would not work. To overcome this, we observe that one can directly construct $\eps$-net for the matrix obtained by removing the diagonal from $\Pi^\top\Pi$. Combining these two observations and standard chaining argument, we are able to show that \textsf{CountSketch} provides $\ell_2$ weak tracking.
\end{remark}

\section{Strong tracking of \textsf{AMS} sketch and \textsf{CountSketch}}\label{sec:strong}
In this section, we are going to discuss the strong tracking of \textsf{AMS} sketch and \textsf{CountSketch}. We start with a standard reduction from weak tracking to strong tracking via union bound. This gives us an $O(\log m)$ blow-up in the dependency on $\delta$. Next, we show that this is essentially tight for both $\textsf{AMS}$ sketch and \textsf{CountSketch} up to a logarithmic factor.

\begin{lemma}[folklore]\label{lem:weak to strong}
For any $\epsilon>0$, $\delta\in(0,1)$, and $n,m\in\N$. If a linear sketch provides $(\epsilon,\delta)$ weak tracking for length $m$ inputs having value from $[n]$, then it also provides $(2\epsilon,\delta')$ strong tracking where $\delta'=\min\{1,(\log m)\cdot\delta\}$.
\end{lemma}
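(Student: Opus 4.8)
The plan is to prove Lemma~\ref{lem:weak to strong} by a dyadic decomposition of the timeline $[m]$ combined with the weak-tracking guarantee applied on each block. First I would observe that within a block where the norm does not change by more than a constant factor, weak tracking of that block (viewed as its own insertion-only stream, or rather as the stream restricted to that block) gives exactly the additive error one wants for strong tracking, since on such a block $\|f^{(t)}\|_2^2$ is comparable to the block-final norm. So the first step is: partition the stream into consecutive time intervals $I_0, I_1, \dots, I_L$ where $I_r$ is the maximal interval on which $\|f^{(t)}\|_2^2$ stays in $[2^r \|f^{(1)}\|_2^2, 2^{r+1}\|f^{(1)}\|_2^2)$ (or an analogous normalization); since the frequency vector's squared norm is monotone nondecreasing in $t$ for an insertion-only stream and ranges over at most a factor of $\poly(m)$ — indeed $\|f^{(m)}\|_2^2 \le m^2$ and $\|f^{(1)}\|_2^2 \ge 1$ — there are at most $L = O(\log m)$ such blocks.

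\medskip

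\noindent\textbf{Main steps.} (1) Fix the block structure above, so $L+1 = O(\log m)$. (2) For each block $I_r = \{t_r, t_r+1, \dots, t_{r+1}-1\}$, apply the $(\epsilon,\delta)$-weak-tracking hypothesis. The key point is that weak tracking's additive error is measured against the \emph{final} norm of the stream it is applied to; to get an error of $\epsilon \|f^{(t)}\|_2^2$ for $t$ in block $I_r$, I want the weak-tracking error for the relevant sub-stream to be $\epsilon$ times something no larger than $\|f^{(t)}\|_2^2$, and on $I_r$ we have $\|f^{(t_{r+1}-1)}\|_2^2 \le 2\|f^{(t)}\|_2^2$ by construction of the block. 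Since the sketch is \emph{linear}, the sketch at time $t$ depends only on $f^{(t)}$, so running the single sketch on the whole stream simultaneously witnesses weak tracking on each prefix-block; thus for each $r$, with probability $\ge 1 - \delta$, for all $t \in I_r$ we have $\bigl|\|\tilde f^{(t)}\|_2^2 - \|f^{(t)}\|_2^2\bigr| \le \epsilon \|f^{(t_{r+1}-1)}\|_2^2 \le 2\epsilon\|f^{(t)}\|_2^2$. (3) Union-bound over the $O(\log m)$ blocks: the bad event (some block fails) has probability $\le (L+1)\delta = O(\log m)\cdot\delta$, and clamping at $1$ gives $\delta' = \min\{1, (\log m)\cdot\delta\}$. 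On the complement, for every $t \in [m]$ the error is at most $2\epsilon\|f^{(t)}\|_2^2$, which is precisely $(2\epsilon,\delta')$-strong tracking.

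\medskip

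\noindent\textbf{The subtle point / main obstacle.} The one thing to be careful about is the exact way weak tracking is invoked on a block: the clean definition of weak tracking in the paper refers to a whole stream and its final vector $f^{(m)}$, not to an arbitrary sub-window. The fix is to use linearity of the sketch together with the observation that for a prefix of the stream ending at the block boundary $t_{r+1}-1$, the weak-tracking guarantee (applied to that prefix as its own stream) bounds the error for all $t \le t_{r+1}-1$ by $\epsilon\|f^{(t_{r+1}-1)}\|_2^2$; restricting attention to $t \in I_r$ then gives the claimed bound because $\|f^{(t_{r+1}-1)}\|_2^2 \le 2\|f^{(t)}\|_2^2$ there. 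A minor bookkeeping item is handling the first block $I_0$ (and any $t$ with very small norm) and confirming the factor-of-$2$ and the number of blocks precisely — but these are routine, and no genuinely hard step arises; the lemma is folklore precisely because the dyadic-blocking-plus-union-bound template is standard. I would therefore present steps (1)–(3) concisely and note the linearity observation as the crux.
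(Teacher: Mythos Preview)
Your proposal is correct and follows the same folklore template as the paper: chop the timeline into $O(\log m)$ blocks, invoke $(\epsilon,\delta)$-weak tracking on the prefix ending each block, and union-bound. The one substantive difference is the choice of blocks. The paper fixes dyadic \emph{time} points $t_i = 2^i - 1$ and asserts that for $t_{i-1} < t \le t_i$ one has $\tfrac{1}{2}\|f^{(t_i)}\|_2^2 \le \|f^{(t)}\|_2^2$; you instead block by dyadic \emph{norm} levels, i.e.\ $I_r = \{t : \|f^{(t)}\|_2^2 \in [2^r, 2^{r+1})\}$, so the factor-$2$ comparison holds by construction. Your choice is in fact the more robust one: the time-based comparison the paper states need not hold for arbitrary insertion-only streams (e.g.\ if every item is identical then $\|f^{(t)}\|_2^2 = t^2$, and across a time-dyadic block the ratio is $\approx 1/4$ rather than $1/2$). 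Both approaches yield the same $O(\log m)$ block count (since $1 \le \|f^{(t)}\|_2^2 \le m^2$), and both rely on linearity of the sketch to reuse the single sketch across all prefixes, so the spirit and the bookkeeping are identical; your decomposition just makes the key inequality $\|f^{(\text{block end})}\|_2^2 \le 2\|f^{(t)}\|_2^2$ automatic rather than asserted.
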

\begin{proof}
See~\autoref{sec:proof weak to strong} for details.
\end{proof}

From~\autoref{lem:weak to strong}, we immediate have the following corollaries.
\begin{corollary}
For any $\epsilon>0$ and $\delta\in(0,1)$, \textsf{AMS} sketch with $O\left(\epsilon^{-2}(\log\log m + \log(1/\delta))\right)$ rows provides $\ell_2$ $(\epsilon,\delta)$-strong tracking.
\end{corollary}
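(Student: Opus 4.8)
The plan is to obtain this as an immediate consequence of the weak-to-strong reduction in~\autoref{lem:weak to strong} together with the previously known weak-tracking guarantee for the \textsf{AMS} sketch. Recall from~\cite{braverman2017bptree} that \textsf{AMS} with $O(\epsilon^{-2}\log(1/\delta))$ rows provides $\ell_2$ $(\epsilon,\delta)$-weak tracking for streams of length $m$ over $[n]$; since for a linear sketch the number of rows equals the number of words of space, this is exactly the guarantee we will feed into the reduction.

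First I would instantiate the weak-tracking bound at a smaller failure probability. Set $\delta'' = \delta/\log m$ and run \textsf{AMS} with $k = O(\epsilon^{-2}\log(1/\delta''))$ rows. Because $\log(1/\delta'') = \log\log m + \log(1/\delta)$, this is precisely $k = O\!\left(\epsilon^{-2}(\log\log m + \log(1/\delta))\right)$ rows, and for this sketch we have $\ell_2$ $(\epsilon,\delta'')$-weak tracking. Next I would apply~\autoref{lem:weak to strong}, which upgrades $(\epsilon,\delta'')$-weak tracking to $(2\epsilon,\min\{1,(\log m)\cdot\delta''\})$-strong tracking; by our choice of $\delta''$ and since $\delta\in(0,1)$, the failure probability is exactly $\delta$. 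Finally, replacing $\epsilon$ by $\epsilon/2$ throughout only rescales the hidden constant in $k$ (as $(\epsilon/2)^{-2}=4\epsilon^{-2}$), yielding $\ell_2$ $(\epsilon,\delta)$-strong tracking with $O\!\left(\epsilon^{-2}(\log\log m+\log(1/\delta))\right)$ rows, as claimed.

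There is no genuine obstacle here: the corollary is a direct composition of the reduction lemma with the known weak-tracking bound for \textsf{AMS}. The only point requiring care is the parameter bookkeeping — driving the weak-tracking failure probability down by the $\log m$ factor demanded by the union bound in~\autoref{lem:weak to strong} costs only an \emph{additive} $\log\log m$ inside the logarithm of the row count, rather than a multiplicative $\log m$ blow-up, and it is this observation that produces the stated $\log\log m$ term.
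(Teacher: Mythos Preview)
Your proposal is correct and matches the paper's approach: the paper states this corollary as an immediate consequence of~\autoref{lem:weak to strong} together with the known $\ell_2$ $(\epsilon,\delta)$-weak tracking guarantee for \textsf{AMS} from~\cite{braverman2017bptree}, and you have filled in precisely the parameter bookkeeping (setting the weak-tracking failure probability to $\delta/\log m$ and absorbing the factor of $2$ in $\epsilon$) that the paper leaves implicit.
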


\begin{corollary}
For any $\epsilon>0$ and $\delta\in(0,1)$, \textsf{CountSketch} with $O\left(\epsilon^{-2}\delta^{-1}\log m\right)$ rows provides $\ell_2$ $(\epsilon,\delta)$-strong tracking.
\end{corollary}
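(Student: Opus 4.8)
The plan is to obtain this corollary as a black-box combination of the weak-tracking guarantee for \textsf{CountSketch} (\autoref{thm:CS weak}) with the folklore weak-to-strong reduction (\autoref{lem:weak to strong}). The only structural prerequisite is that \textsf{CountSketch} is a linear sketch — the maintained state is $\tilde f^{(t)} = \Pi f^{(t)}$ for a fixed (randomly chosen) matrix $\Pi$, as in the construction recalled in \autoref{sec:preliminaries} — so \autoref{lem:weak to strong} is indeed applicable to it.

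Concretely, I would instantiate \textsf{CountSketch} with accuracy parameter $\epsilon' = \epsilon/2$ and failure parameter $\delta' = \delta/\log m \in (0,1)$. By \autoref{thm:CS weak}, with $k = \Omega\bigl((\epsilon')^{-2}(\delta')^{-1}\bigr) = \Omega\bigl(\epsilon^{-2}\delta^{-1}\log m\bigr)$ rows this instance provides $\ell_2$ $(\epsilon',\delta')$-weak tracking. Feeding this into \autoref{lem:weak to strong} (legitimate since the sketch is linear) upgrades the guarantee to $\ell_2$ $(2\epsilon',\delta'')$-strong tracking with $\delta'' = \min\{1,(\log m)\cdot \delta'\} = \min\{1,\delta\} = \delta$ and $2\epsilon' = \epsilon$. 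This is exactly the stated $\ell_2$ $(\epsilon,\delta)$-strong tracking with $O(\epsilon^{-2}\delta^{-1}\log m)$ rows; note also that a single \textsf{CountSketch} instance has $O(1)$ update time regardless of its number of rows, which is why the corollary is of interest.

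There is no real obstacle here — the argument is bookkeeping. The one point worth stating carefully is that the $2\epsilon'$ accuracy blow-up and the $\log m$ failure-probability blow-up incurred by \autoref{lem:weak to strong} are both absorbed by a constant-factor increase in the hidden constant of the $\Omega(\cdot)$ in \autoref{thm:CS weak}, which is what keeps the final row count at $O(\epsilon^{-2}\delta^{-1}\log m)$ rather than something larger. (Implicitly we use that the stream is insertion-only, so $\|f^{(t)}\|_2^2$ is nondecreasing and spans only $O(\log m)$ dyadic scales — precisely the hypothesis under which \autoref{lem:weak to strong} is proved.)
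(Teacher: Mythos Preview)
Your proposal is correct and is exactly the argument the paper intends: the corollary is stated immediately after \autoref{lem:weak to strong} with the remark ``From~\autoref{lem:weak to strong}, we immediately have the following corollaries,'' i.e., plug the weak-tracking guarantee of \autoref{thm:CS weak} into the weak-to-strong reduction with the obvious reparametrization $\epsilon'=\epsilon/2$, $\delta'=\delta/\log m$. There is nothing to add.
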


\begin{remark*}
After applying median trick on \textsf{CountSketch}, the dependency of the number of rows on $\delta$ becomes $O(\log(1/\delta))$ and thus $O\left(\epsilon^{-2}(\log\log m + \log(1/\delta))\right)$ rows suffices to achieve $\ell_2$ $(\epsilon,\delta)$-strong tracking.
\end{remark*}

In the following, we are going to show that the above two upper bounds are essentially tight for these two algorithms.

\begin{theorem}\label{thm:strong AMS lower bound}
There exists constants $C>0$ such that for any $\epsilon\in(0,0.1)$ and $\delta\in(0,1)$, there exists $N_0\in\N$ such that if $k<C\cdot\left(\log\frac{\log m}{\log(1/\epsilon)}+\log(1/\delta)\right)$ and $N_0\leq n\leq m$, then fully independent \textsf{AMS} sketch with $k$ rows does not provide $\ell_2$ $(\epsilon,\delta)$-strong tracking.
\end{theorem}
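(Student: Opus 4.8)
The plan is to exhibit a specific hard instance — an insertion-only stream — on which the fully independent \textsf{AMS} sketch with too few rows fails to provide strong tracking. The natural choice is a stream that places all its mass on a single coordinate: $a_1 = a_2 = \cdots = a_m = 1$, so that $f^{(t)} = t \cdot e_1$ and $\|f^{(t)}\|_2^2 = t^2$. For this stream the \textsf{AMS} estimator at time $t$ is $\|\Pi f^{(t)}\|_2^2 = t^2 \|\Pi e_1\|_2^2 = t^2 \cdot \frac{1}{k}\sum_{j=1}^k \sigma_{j,1}^2 = t^2$ deterministically, which is useless since it makes the sketch look perfect. So instead I would use a stream that is \emph{spread out}: insert each of $n$ distinct items once, i.e.\ $a_i = i$, giving $f^{(t)} = \sum_{i\le t} e_i$, $\|f^{(t)}\|_2^2 = t$. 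Now the estimator is $\sigma_t := \|\Pi f^{(t)}\|_2^2$, and its fluctuations as $t$ grows form (after rescaling) a random-walk-like process. The key point is that over $t \in [n]$ the estimator $\sigma_t - t$ behaves like a martingale that must make $\Theta(\log n / \log(1/\eps))$ "independent attempts" at each scale to exceed the allowed deviation $\eps \cdot t$, and with only $k$ rows each attempt fails (i.e.\ stays within the band) with probability bounded away from $1$ by a constant, so the probability of surviving all attempts is $\exp(-\Omega(k)) \cdot (\text{something})$; matching this against $\delta$ forces $k = \Omega(\log\log m/\log(1/\eps) + \log(1/\delta))$.

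More concretely, here is the sequence of steps I would carry out. First, fix the spread-out stream above and write $Z_t := \sigma_t = \sum_{j=1}^k Y_{j,t}^2/k$ where $Y_{j,t} = \sum_{i \le t}\sigma_{j,i}$ is a $\pm1$ random walk; then $\E Z_t = t$ and $\Var(Z_t) = \Theta(t^2/k)$. Second, build a geometric sequence of times $t_0 < t_1 < \cdots < t_L$ with $t_{r+1} \approx (1+c\eps)^2 t_r$ for a small constant $c$, chosen so that the "standard deviation at scale $t_r$," which is $\approx t_r/\sqrt{k}$, compared to the allowed deviation $\eps t_r$, together with the number of levels $L = \Theta(\log(m)/\log(1/\eps))$ — wait, more carefully: $L$ should be chosen so that $t_L \le m$ and $L \approx \log m$; the ratio $\log(\log m)/\log(1/\eps)$ will emerge from how the fresh-randomness increments between consecutive checkpoints interact. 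Third, observe that the increments $Z_{t_{r+1}} - Z_{t_r}$ conditioned on the past have, for each row $j$, a fresh contribution from the $\approx (t_{r+1}-t_r)$ new Rademachers, so there is an event $E_r$ — "the estimator at time $t_{r+1}$ deviates from $t_{r+1}$ by more than $\eps t_{r+1}$" — that occurs with probability at least some constant $p_0 > 0$ \emph{independently of whether} $E_0,\dots,E_{r-1}$ occurred, because each $E_r$ only needs a constant-probability fluctuation of the fresh block of coordinates. Fourth, use a second-moment / anti-concentration argument (or a direct Paley–Zygmund bound on $\sum_j (\text{fresh block})^2$) to certify $p_0 = \Omega(1)$ when $k$ is below the threshold, using that a chi-square-like variable with $k$ degrees of freedom has constant-probability deviations of relative size $\Theta(1/\sqrt k)$, which exceeds $\eps$ precisely when $k < C/\eps^2$ — but we want a bound \emph{logarithmic} in $1/\delta$, not polynomial in $1/\eps$, so the real mechanism is that we get to try $L$ times, and $\Pr[\text{no } E_r \text{ occurs}] \le (1-p_0)^{\text{(number of independent levels)}}$, and for this to be $> \delta$ we need the number of independent levels $\lesssim \log(1/\delta)$, while the number of levels available is $\gtrsim \log(\log m / \log(1/\eps))$ after accounting for the fact that consecutive checkpoints must be separated by enough fresh coordinates. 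Fifth, conclude: if $k < C(\log(\log m/\log(1/\eps)) + \log(1/\delta))$ then $\Pr[\exists t: |\sigma_t - \|f^{(t)}\|_2^2| > \eps \|f^{(t)}\|_2^2] \ge \Pr[\exists r: E_r] \ge 1 - (1-p_0)^{\Omega(k)} > \delta$, contradicting strong tracking; choosing $N_0$ large enough that all the asymptotic estimates kick in finishes the proof.

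The main obstacle — and the step I would spend the most care on — is establishing the \emph{conditional independence with constant success probability} of the events $E_r$ at geometrically growing scales, i.e.\ that checking the sketch at time $t_{r+1}$ gives a genuinely fresh $\Omega(1)$ chance of a large relative deviation regardless of the history. The subtlety is that $Z_{t_{r+1}}$ is not independent of $Z_{t_r}$ — they share the first $t_r$ Rademachers — so I cannot literally take a product of probabilities; I need a supermartingale-style argument, conditioning on $\mathcal{F}_{t_r} = \sigma(\sigma_{j,i} : j \in [k], i \le t_r)$ and showing that $\Pr[E_{r+1} \mid \mathcal F_{t_r}] \ge p_0$ \emph{pointwise} on the event $E_0^c \cap \cdots \cap E_r^c$ (the uncontradicted branch), which does hold because on that branch the conditional mean of $Z_{t_{r+1}}$ is within $\eps t_r$ of $t_{r+1}$ and the fresh block of $\approx c\eps\, t_{r+1}$ coordinates contributes, per row, an independent random walk increment whose square has a constant-probability $\Theta(1/\sqrt k)$-relative fluctuation. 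Making the arithmetic of "how many fresh coordinates are needed between checkpoints" line up to produce exactly $\log(\log m / \log(1/\eps))$ usable levels (rather than $\log m$ or $\log(1/\eps)$) is the quantitatively delicate part; I would handle it by taking $t_r$ doubly-exponentially spaced at the coarse level and only linearly/geometrically spaced at the fine level, so that the count of independent scales is logarithmic in the number of "$\eps$-resolution cells" in $[1,m]$, which is $\log m / \log(1/\eps)$.
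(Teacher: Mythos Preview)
You have the right hard instance (the stream of all distinct elements, $a_t = t$) and the right skeleton: checkpoints $t_j$, events $E_j$ driven by the fresh block of Rademachers between $t_{j-1}$ and $t_j$, independence of the blocks, and a product bound. But the quantitative core is where your proposal breaks down, and it is precisely the step you flag as ``the main obstacle.''

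The gap is your handling of the per-event probability $p_0$. You assert each $E_r$ occurs with some fixed constant probability (via Paley--Zygmund, or the $\Theta(1/\sqrt{k})$ fluctuation of a $k$-degree chi-square), and then try to extract the $k$-dependence from the number of levels, writing $1 - (1-p_0)^{\Omega(k)}$. This is incoherent: the exponent in that product is the number $\ell$ of checkpoints, not $k$, and with $p_0$ a fixed constant the product $(1-p_0)^\ell \to 0$ as $m$ (hence $\ell$) grows, yielding no constraint on $k$ whatsoever. The mechanism in the paper is different: one takes $\Delta_j = \lceil 10/\epsilon\rceil^j$, $t_j = \sum_{j'\le j}\Delta_{j'}$, so that $\ell = \Theta(\log m/\log(1/\epsilon))$ and the previous-block and cross terms at time $t_j$ are $O(\epsilon\, t_j)$; defines $E_j := \{Z_j \ge (1+2\epsilon)\,\E Z_j\}$ where $Z_j = \sum_{i\in[k]}\bigl(\sum_{s=t_{j-1}+1}^{t_j}\Pi_{i,s}\bigr)^2$ depends only on the fresh block; and then uses Berry--Esseen plus a chi-square tail \emph{lower} bound to get $\Pr[E_j] \ge e^{-c\epsilon^2 k}$. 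Because the $Z_j$'s sit on disjoint Rademachers, the $E_j$ are genuinely mutually independent (so your supermartingale/conditional apparatus is unnecessary), and $\Pr[\cup_j E_j] \ge 1 - (1-e^{-c\epsilon^2 k})^\ell \gtrsim \ell\, e^{-c\epsilon^2 k}$; requiring this to exceed $\delta$ forces $k \lesssim \epsilon^{-2}\bigl(\log\ell + \log(1/\delta)\bigr)$. In short, $k$ enters through the exponent of the single-event tail, not through the number of levels --- and neither your $(1+c\epsilon)^2$ spacing nor the ``doubly-exponential'' spacing you suggest at the end produces this.
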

That is, $\textsf{AMS}$ sketch requires $\tilde{\Omega}\left(\epsilon^{-2}(\log\log m+\log(1/\delta))\right)$ rows to achieve $\ell_2$ $(\epsilon,\delta)$-strong tracking. Interestingly, the hard instance for \textsf{AMS} sketch to achieve strong tracking is simply the stream consisting all distinct elements. See~\autoref{sec:proof strong AMS} for details.
 
\begin{theorem}\label{thm:CS no strong}
There exists a constant $C>0$ such that for any $\epsilon\in(0,0.5)$, and $\delta\in(0,1)$, there exists $N_0\in\N$ such that if $k\leq C\cdot\epsilon^{-2}\delta^{-1}\frac{\log m}{\log(1/\epsilon)}$ and $N_0\leq n\leq O(\log m)$, then \textsf{CountSketch} with $k$ rows does not provide $\ell_2$ $(\epsilon,\delta)$-strong tracking.
\end{theorem}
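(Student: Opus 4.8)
The plan is to exhibit one hard stream and show that \textsf{CountSketch} with $k$ rows violates strong tracking at one of $R:=\Theta\!\left(\log m/\log(1/\eps)\right)$ carefully placed checkpoints. Each checkpoint will fail with probability $\Omega\!\left(\min\{1,\eps^{-2}/k\}\right)$, and the checkpoints will be ``independent enough'' that the overall failure probability is $\Omega\!\left(\min\{1,R\eps^{-2}/k\}\right)$, which exceeds $\delta$ once $k\le C\eps^{-2}\delta^{-1}R$ for a sufficiently small absolute constant $C$. The extra $\eps^{-2}$ over the $1/k$ that a single ``bad pair'' would give is exactly the one-shot effect behind \autoref{lem:CS estimation}: on a flat vector with $\Theta(\eps^{-1})$ equal nonzero coordinates there are $\Theta(\eps^{-2})$ pairs, and the collision of \emph{any} one of them already breaks the $(1\pm\eps)$ guarantee, so the estimate is wrong with probability $\approx\eps^{-2}/k$. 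Since $R$ fresh copies of such a gadget would cost $\Theta(R/\eps)$ items and overrun the budget $n\le O(\log m)$, we keep one shared pool of ``light'' items and refresh only the heavy part between rounds.

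\textbf{The stream.} Fix a pool of $p:=\lceil c_0\eps^{-2}\rceil$ light items $q_1,\dots,q_p$ ($c_0$ a small absolute constant) and $R$ distinct heavy items $P_1,\dots,P_R$; set $n:=p+R$, which lies in the stated range $[N_0,O(\log m)]$ once $N_0=N_0(\eps,\delta)$ is taken large (so that the length-$m$ stream below fits over $[n]$). The stream runs in $R$ rounds: in round $r$ we first insert $P_r$ until its count reaches $G_r:=\lceil\eps^{-Dr}\rceil$ for a suitable constant $D$, and then insert each $q_i$ until its count reaches $g_r:=\lceil 3\eps G_r\rceil$. Let $t_r$ be the time at the end of round $r$, so $f^{(t_r)}$ assigns $G_j$ to $P_j$ for $j\le r$, $g_r$ to every $q_i$, and $0$ elsewhere. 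The total length is $m=\Theta(\eps^{-DR})$, so $R=\Theta(\log m/\log(1/\eps))$ and the hypothesis $k\le C\eps^{-2}\delta^{-1}\log m/\log(1/\eps)$ becomes $k\le C'\eps^{-2}\delta^{-1}R$.

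\textbf{One checkpoint fails with probability $\Omega(\min\{1,\eps^{-2}/k\})$.} Since $G_j$ grows geometrically one has $\sum_{j<r}G_j^2=O(\eps^{2D})G_r^2$, and together with $p\,g_r^2=\Theta(c_0)\,G_r^2$ this gives $\|f^{(t_r)}\|_2^2=(1+\Theta(c_0))G_r^2<1.1\,G_r^2$; in particular the top heavy item carries a constant fraction of the mass. Let $X_r$ be the number of light items hashed by \textsf{CountSketch} into $P_r$'s bucket. On the event $\{X_r\ge1\}$ the matrix $\tB_{\eta,f^{(t_r)}}$ has two (symmetric) entries equal to $g_rG_r\ge3\eps G_r^2$, so $\|\tB_{\eta,f^{(t_r)}}\|_F\ge3\sqrt2\,\eps G_r^2>\eps\|f^{(t_r)}\|_2^2$. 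Since $\sigma^\top\tB_{\eta,f^{(t_r)}}\sigma$ is a mean-zero quadratic form in the Rademacher vector $\sigma$ with $\E_\sigma[(\sigma^\top\tB_{\eta,f^{(t_r)}}\sigma)^2]=2\|\tB_{\eta,f^{(t_r)}}\|_F^2$ and, by \autoref{lem:HW} applied with exponent $4$, fourth moment $O(\|\tB_{\eta,f^{(t_r)}}\|_F^4)$, the Paley--Zygmund inequality yields an absolute constant $c_1>0$ with
\[
\Pr_\sigma\!\left[\,\gamma(f^{(t_r)})>\eps\|f^{(t_r)}\|_2^2 \;\middle|\; X_r\ge1\,\right]\;\ge\;c_1 .
\]
Finally $\Pr_\eta[X_r\ge1]=1-(1-1/k)^p\ge\tfrac12\min\{1,p/k\}$, so round $r$ fails with probability at least $q:=\tfrac{c_1}{2}\min\{1,\,c_0\eps^{-2}/k\}=\Omega(\min\{1,\eps^{-2}/k\})$.

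\textbf{Combining the checkpoints, and the main obstacle.} Conditioning on the hash values of the light items — which with constant probability occupy $\rho=\Theta(\min\{p,k\})$ distinct buckets — the events $\{X_r\ge1\}=\{h(P_r)\in(\text{those }\rho\text{ buckets})\}$ become independent over the i.i.d.\ locations $h(P_1),\dots,h(P_R)$, each of probability $\rho/k=\Theta(\min\{1,\eps^{-2}/k\})$; hence $\Pr[\exists r: X_r\ge1]\ge1-e^{-\Omega(R\min\{1,\eps^{-2}/k\})}$, and on this event some round carries the conditional failure probability $c_1$, giving $\Pr[\text{strong tracking fails}]\ge\Omega(1)\cdot c_1\cdot\bigl(1-e^{-\Omega(R\min\{1,\eps^{-2}/k\})}\bigr)$. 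With $k\le C\eps^{-2}\delta^{-1}R$ the exponent is $\Omega(\delta/C)$, so the bound exceeds $\delta$ for $C$ small; for $\delta$ close to $1$ one instead takes $N_0(\eps,\delta)$ large so that $R$, and hence the number of rounds with $X_r\ge1$, is large enough that failure is forced above $\delta$ (using that the decisive sign $\sigma_{q_{i^\ast}}\sigma_{P_r}$ carries a fresh factor $\sigma_{P_r}$ in each round, or standard independent-repetition boosting). The delicate part is precisely this combining step: all $R$ checkpoints use the \emph{same} $\sigma$, and the matrices $\tB_{\eta,f^{(t_r)}}$ are coupled through the earlier heavy items $P_1,\dots,P_{r-1}$. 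What makes the rounds essentially independent is taking $D$ large: the earlier heavy items contribute only $O(\eps^{D})\,G_r^2$ in Frobenius norm (hence, up to poly-logarithmic fluctuations in $\sigma$ and after discarding an event of probability $\le R/k$, a negligible amount) to $\tB_{\eta,f^{(t_r)}}$ compared with the $3\eps G_r^2$ of the decisive light--$P_r$ collision, so whether round $r$ fails is governed solely by where the fresh item $P_r$ was hashed — and those locations are genuinely independent.
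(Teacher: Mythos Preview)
Your construction is genuinely different from the paper's. The paper uses $\ell=\Theta(\log m/\log(1/\eps))$ \emph{disjoint} blocks of $w=\lceil 1/\eps\rceil$ fresh items, with block $j$ carrying the value $\Delta^j$ (where $\Delta=\lceil 100/\eps\rceil$). The bad event at checkpoint $j$ is that exactly one pair inside block $j$ collides with the \emph{same sign}; this happens with probability roughly $\binom{w}{2}/(2k)\approx 1/(k\eps^2)$ and, crucially, forces $\|\Pi\bar f^{(t_j)}\|_2^2-\|\bar f^{(t_j)}\|_2^2\ge 2\Delta^{2j}\gtrsim\eps\|f^{(t_j)}\|_2^2$ \emph{deterministically}. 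Because each block uses entirely fresh hash and sign randomness, the conditional bound $\Pr[E_j\mid \neg E_1,\dots,\neg E_{j-1}]\ge 1/(10k\eps^2)$ follows immediately, and the combining step is just $1-(1-1/(10k\eps^2))^\ell$. No Paley--Zygmund, no Hanson--Wright, no worrying about shared $\sigma$'s.

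Your route---a shared pool of $p=\Theta(\eps^{-2})$ light items plus one fresh heavy item $P_r$ per round, detecting a heavy--light collision and then applying Paley--Zygmund over $\sigma$---recovers the same $\Theta(\eps^{-2}/k)$ per-round failure probability, and that part of the argument is correct. But it buys nothing and costs something. First, your stated motivation (staying under $n\le O(\log m)$) is not achieved: the pool alone has $\Theta(\eps^{-2})$ items, which is no smaller than the paper's $\ell w=\Theta(\ell/\eps)$. Second, and more substantively, the combining step for $\delta$ near $1$ is a real soft spot. All rounds share the light-item signs, so ``standard independent-repetition boosting'' does not apply off the shelf; your Paley--Zygmund bound only gives failure probability $\ge c_1$ for some small absolute constant, which is insufficient once $\delta>c_1$. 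The ``fresh $\sigma_{P_r}$'' idea can be made to work (conditioning on everything but $\sigma_{P_r}$, at least one sign choice gives $|\gamma|\ge|A_r|$ with $A_r$ the $\sigma_{P_r}$-coefficient), but you then need $|A_r|$ bounded below, which fails when $X_r\ge 2$ and the colliding light signs cancel---so one must restrict to $X_r=1$ or odd $X_r$ and redo the probability estimate. All of this is avoidable: the paper's disjoint-block design makes independence structural and the per-round failure deterministic.
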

That is, \textsf{CountSketch} requires $\tilde{\Omega}(\epsilon^{-2}\delta^{-1}\log m)$ rows to achieve $\ell_2$ $(\epsilon,\delta)$-strong tracking. The hard instance for \textsf{CountSketch} is more complicated than that of \textsf{AMS} sketch. See~\autoref{sec:proof strong CS} for details.

\section{Conclusion}\label{sec:conclusion}
In this work, we showed that \textsf{CountSketch} provides $\ell_2$ weak tracking with update time having no dependence on the error parameter $\epsilon$. We also give almost tight $\ell_2$ strong tracking lower bounds for \textsf{AMS} sketch and \textsf{CountSketch}.

An immediate open problem after this work would be tracking $\ell_p$ with faster update time for $0<p<2$.
The $\ell_p$ estimation problem had been solved by Indyk~\cite{indyk2006stable} via \textit{$p$-stable sketch} and was proven to provide weak tracking by B\l{}asiok et al.~\cite{blasiok2017continuous}. However, same as \textsf{AMS} sketch, the $p$-stable sketch is dense and has update time $\Omega(\epsilon^{-2})$. Nevertheless, Kane et al.~\cite{KNPW11} gave a space-optimal algorithm for $\ell_p$ estimation problem with update time $O(\log^2(1/\epsilon)\log\log(1/\epsilon))$. It would be interesting to see if their algorithm also provides $\ell_p$ weak tracking.

\subsection*{Acknowledgement}
The authors wish to thank Jelani Nelson for invaluable advice throughout the course of this research.
We also thank Mitali Bafna and Jaros\l{}aw B\l{}asiok for useful discussion and thank Boaz Barak for many helpful comments on an earlier draft of this article.
\bibliographystyle{alpha}
\bibliography{mybib}

\appendix

\section{Implementation of \textsf{CountSketch}}\label{sec:CS}
Here, we present the implementation of \textsf{CountSketch} for the completeness. Note that the construction is standard and not new.

\begin{algorithm}[H] 
	\caption{Constructing \textsf{CountSketch}}\label{algo:count sketch}
	\begin{algorithmic}[1] 
		\State $k\leftarrow\ceil*{\frac{c}{\epsilon^2}}$ for some constant $c>0$.
		\State $\tilde{f}\in\Z^k$ vector with initial value 0.
		\State Sample $h:[n]\rightarrow[k]$ from a 8-wise independent hash family.
		\State Sample $g:[n]\rightarrow\{\pm1\}$ from a 8-wise independent hash family.
		\For{$t=1,2,\dots,m$}
		\State On input $a_t=i$, set $\tilde{f}_{h(i)}=\tilde{f}_{h(i)}+g(i)$.
		\EndFor
	\end{algorithmic}
\end{algorithm}

Note that both $h$ and $g$ can be stored in space $O(\log n+\log(1/\epsilon))$ and be evaluated in $O(1)$ many arithmetic operations. $\tilde{f}$ can be stored in space $O(\epsilon^{-2}\log m)$ bits. For the convenience of analysis, we define the sketching matrix $\Pi\in\{0,\pm1\}^{k\times n}$ of \textsf{CountSketch} by $\Pi_{h(i),i}=g(i)$ for all $i\in[n]$.

\section{Proofs for strong tracking}\label{sec:proof of strong tracking}

\subsection{From weak tracking to strong tracking}\label{sec:proof weak to strong}
After applying union bound on all points $t=1,2,\dots,m$, a streaming algorithm provides $\ell_2$ $(\epsilon,\delta)$-approximation also provides $\ell_2$ $(\epsilon,\delta')$-strong tracking where $\delta'=\min\{1,m\delta\}$. However, the blow-up in $\delta$ is $m$, which is undesirable. The following lemma shows that with a more delicate union bound argument, the reduction from weak tracking to strong tracking only has $O(\log m)$ blow-up in $\delta$. Note that the lemma is a folklore and we provide a proof for completeness.

\begin{proof}
	Let $\{f^{(t)}\}_{t\in[m]}$ be the frequency of an insertion-only stream and let $\{\tilde{f}^{(t)}\}_{t\in[m]}$ be its (randomized) approximations produced by the linear sketch. Let $w=\floor*{\log m}+1$ and $t_i=2^i-1$ for each $i\in[w]$. Note that for each $i\in[w]$ and $t_{i-1}< t\leq t_i$, $\frac{1}{2}\|f^{(t_i)}\|_2^2\leq\|f^{(t)}\|_2^2\leq\|f^{(t_i)}\|_2^2$. Define the event
	\begin{equation*}
	E_i:=\left\{\|\tilde{f}^{(t_i)}\|_2^2-\|f^{(t_i)}\|_2^2|>\epsilon\|f^{(t_i)}\|_2^2 \right\}.
	\end{equation*}
	Observe that for each $t_{i-1}<t\leq t_i$, $|\|\tilde{f}^{(t)}\|_2^2-\|f^{(t_i)}\|_2^2|>2\epsilon\cdot\|f^{(t)}\|_2^2$ would imply $\neg E_i$. Namely, $\neg\cup_{i\in[w]}E_i$ implies strong tracking.
	
	By the $\ell_2$ $(\epsilon,\delta)$-weak tracking property of the streaming algorithm, for each $i\in [w]$, we have $\Pr\left[E_i\right]\leq\delta$ and thus $\Pr[\cup_{i\in[w]}E_i]\leq w\delta$.
	We conclude that the streaming algorithm provides $\ell_2$ $(2\epsilon,w\delta)$-strong tracking.
\end{proof}

\subsection{Strong tracking lower bound for \textsf{AMS} sketch}\label{sec:proof strong AMS}
The hard instance is simply the stream of all distinct elements, \textit{i.e.,} $i_t=t$ for all $t\in[m]$.
\begin{proof}[Proof of~\autoref{thm:strong AMS lower bound}]
	Consider the stream of all distinct elements as the hard instance, \textit{i.e.,} $i_t=t$ for all $t\in[m]$. Thus, $\|f^{(t)}\|_2^2=t$ and $\|\Pi f^{(t)}\|_2^2=\sum_{i\in[k]}\left(\sum_{j\in[t]}\Pi_{i,j}\right)^2$ for all $t\in[m]$.
	
	Define a sequence of time $\{t_j\}$ as follows. $t_0=0$ and $t_j=\sum_{i\in[j]}\Delta_i$ where $\Delta_i=\ceil*{10/\epsilon}^i$. Pick $\ell$ and $m$ properly such that $t_\ell\leq m$. Some quick facts about the choice of parameters here: (i) $|t_j-\Delta_j|\leq\frac{\epsilon}{5}\cdot t_j$. (ii) $\ell=\Theta(\frac{\log m}{\log(1/\epsilon)})$.
	
	To show \textsf{AMS} sketch does not provide $(\epsilon,\delta)$-strong tracking for $\epsilon\in(0,0.1)$ and $\delta\in(0,1)$, it suffices to show that with probability at least $\delta$ there exists $j\in[\ell]$ such that $\|\Pi f^{(t_j)}\|_2^2-t_j>(1+\epsilon)\cdot t_j$. 
	
	For the convenience of the analysis, for any $i\in[k]$ and $j\in[\ell]$, let $X^{(t_j)}_i=\sum_{s=t_{j-1}+1}^{t_j}\Pi_{i,s}$ which is the sum of $\Delta_j$ independent Rademacher random variables divided by $\sqrt{k}$. Also let $Z_j=\sum_{i\in[k]}(X^{(t_j)}_i)^2$.
	Note that $\E[Z_j]=\Delta_j/\sqrt{k}$ and
	\begin{align}
	\|\Pi f^{(t_j)}\|_2^2&=\sum_{i\in[k]}\left(\sum_{j'\in[j]}X^{(t_{j'})}_i\right)^2\nonumber\\
	&=Z_j + \sum_{i\in[k]}\left(\sum_{j'\in[j-1]}X^{(t_{j'})}_i\right)^2+2\sum_{i\in[k]}\langle X^{(t_j)}_i,\sum_{j'\in[j-1]}X^{(t_{j'})}_i \rangle.\label{eq:strong AMS lb 1}
	\end{align}
	Define an event $E_j:=\{Z_j\geq(1+2\epsilon)\cdot\E[Z_j]\}$ for each $j\in[\ell]$. Observe that when conditioning on $\cap_{j'\in[j-1]}\neg E_{j'}$, the second term of~\autoref{eq:strong AMS lb 1} is bounded by $O(t_{j-1})$ and the third term is bounded by $O(\sqrt{t_{j-1}Z_j})$ due to Cauchy-Schwarz. By the choice of parameters, both term can be bounded by $0.1t_j$. Furthermore, $E_j$ implies $\|\Pi f^{(t_j)}\|_2^2-t_j>(1+\epsilon)\cdot t_j$. Note that $E_j$ is independent to $E_1,\dots,E_{j-1}$. The following lemma lower bound the probability of $E_j$ to happen.
	
	\begin{lemma}\label{lem:strong AMS lb anticoncentration}
		There exists a constant $c>0$ such that $\Pr[E_j]\geq e^{-c\epsilon^2k}$ for any $j=\Omega(\log\log k)$.
	\end{lemma}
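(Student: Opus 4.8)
The plan is to prove an \emph{anti-concentration} bound for a sum of $k$ squared Gaussians-like variables $Z_j = \sum_{i \in [k]} (X_i^{(t_j)})^2$, where each $X_i^{(t_j)}$ is $\frac{1}{\sqrt k}$ times a sum of $\Delta_j$ independent Rademacher signs. After normalizing, we want to show that the event $\{Z_j \ge (1+2\epsilon)\E[Z_j]\}$ — i.e., that a $\chi^2$-like random variable with $k$ degrees of freedom deviates \emph{upward} by a constant relative factor $\epsilon$ — has probability at least $e^{-c\epsilon^2 k}$. This is the ``lower tail of the upper tail'' direction and is exactly the regime where a matching lower bound on Gaussian-type tails should hold, so morally the statement is a standard large-deviation lower bound; the work is in making it rigorous for the Rademacher-sum variables rather than genuine Gaussians.

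First I would reduce to the Gaussian case. For $j = \Omega(\log\log k)$ we have $\Delta_j$ large, so by a Berry--Esseen / CLT argument each normalized coordinate $\sqrt{k}\,X_i^{(t_j)} = \Delta_j^{-1/2}\sum_{s}\sigma_{i,s}$ is close in distribution to a standard Gaussian $g_i$, with the approximation error decaying like $\Delta_j^{-1/2} = 2^{-\Omega(j)}$; choosing $j$ past a threshold of order $\log\log k$ makes this error small enough (say, polynomially small in $k$) that it does not spoil a probability bound of the form $e^{-c\epsilon^2 k}$. (One must be a little careful: $Z_j$ depends on all $k$ coordinates jointly, so I would either couple coordinatewise and control $\sum_i |(X_i^{(t_j)})^2 - g_i^2/k|$ in expectation, or directly invoke a multivariate Berry--Esseen bound for the smooth test function $\mathbf 1[\sum x_i^2/k \ge 1+2\epsilon]$, after a mollification step.) Having reduced to Gaussians, the target becomes: $\Pr[\frac1k\sum_{i\in[k]} g_i^2 \ge 1+2\epsilon] \ge e^{-c\epsilon^2 k}$ for i.i.d.\ standard Gaussians $g_i$.

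For the Gaussian tail I would use one of the standard tools: either (a) the exact expression via the $\chi^2_k$ density together with Stirling to extract the exponent $-\frac{k}{2}(2\epsilon - \ln(1+2\epsilon)) = -\Theta(\epsilon^2 k)$ for small $\epsilon$; (b) a change-of-measure / tilting argument, multiplying by $e^{\lambda \sum g_i^2}$ with $\lambda$ chosen so the mean under the tilted measure is exactly $1+2\epsilon$, then bounding the density ratio on a constant-probability window; or (c) a Paley--Zygmund-type second-moment argument applied to $(\sum g_i^2 - k)_+$, which is the cleanest but gives a slightly lossy constant. Any of these yields the clean lower bound $e^{-c\epsilon^2 k}$ with an absolute constant $c$. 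I would present (b), the exponential tilting, since it generalizes most painlessly and makes the $\epsilon^2 k$ scaling transparent.

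The main obstacle I anticipate is \emph{not} the Gaussian tail itself — that is classical — but cleanly transferring it back to the Rademacher-sum setting at the finite value of $j$. A naive CLT statement controls the distribution of a single coordinate, but $Z_j$ is a function of $k$ dependent-in-aggregate coordinates, and a union bound over coordinates would cost a factor $k$ that interacts badly with an $e^{-c\epsilon^2 k}$ target when $\epsilon$ is tiny. The fix is to bound the \emph{total} coupling error $\E\big|\frac1k\sum_i (\sqrt k X_i^{(t_j)})^2 - \frac1k\sum_i g_i^2\big| = \E|(\sqrt k X_1)^2 - g_1^2|$, which by Berry--Esseen for the coordinate is $O(\Delta_j^{-1/2})$ \emph{per coordinate} but averages, so the quantity $\frac1k\sum_i (\sqrt kX_i)^2$ is within $O(\Delta_j^{-1/2})$ of $\frac1k\sum g_i^2$ with high probability; since $\Delta_j = \lceil 10/\epsilon\rceil^j$ and $j = \Omega(\log\log k)$ we can arrange $\Delta_j^{-1/2} \le \epsilon/10$, so the shift is absorbed into the $2\epsilon$ versus $\epsilon$ slack and the Gaussian event $\{\frac1k\sum g_i^2 \ge 1+2\epsilon + o(\epsilon)\}$ still has probability $\ge e^{-c'\epsilon^2 k}$, finishing the proof. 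I would also remark that full independence of the \textsf{AMS} entries is used here precisely so that the $X_i^{(t_j)}$ across different blocks $j$ (and the Gaussian comparison) are genuinely independent, which is what licenses both the CLT and the later independence of the events $E_j$.
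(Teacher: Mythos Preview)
Your approach is essentially the same as the paper's: reduce from the Rademacher sums $X_i^{(t_j)}$ to Gaussians via Berry--Esseen, and then invoke a lower bound on the upper tail of $\chi^2_k$ to obtain $\Pr[\chi^2_k \ge (1+2\epsilon)k]\ge e^{-c\epsilon^2 k}$. The only real difference is that the paper cites a ready-made result of Inglot--Ledwina for the $\chi^2$ tail lower bound rather than deriving it by tilting or Stirling, and its treatment of the Berry--Esseen transfer is terser than yours.
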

	\begin{proof}[Proof of~\autoref{lem:strong AMS lb anticoncentration}]
		From the seminal \textit{Berry-Esseen theorem}~\cite{berry1941accuracy,esseen1942liapounoff}, we know that when $t_j=e^{\Omega(k)}=\Omega(\frac{\log m}{\delta})$ then $X^{(t_j)}$ is point-wisely $e^{-\Omega(k)}$-close to a normal distribution with zero mean and variance $\Delta_j$. That is, $\frac{kZ_j}{\Delta_j}$ is also point-wisely $e^{-\Omega(k)}$-close to a \textit{chi-square} distribution $\chi^2_{\Delta_j}$ with mean $\Delta_j$ and $\Delta_j$ degree of freedom\footnote{Recall that a \textit{chi-square random variable} of $d$ degree of freedom is equivalent to the sum of $d$ squares of the standard normal random variable.}.
		
		Inglot and Ledwina~\cite{inglot2006asymptotic} showed that the tail of chi-square random distribution can be lower bounded as $\Pr[\chi^2_k\geq(1+2\epsilon)\cdot k]\geq\frac{1}{2}e^{-\epsilon^2k/10}$ when $k$ large enough.
		Combine with the Berry-Esseen theorem, we have $\Pr[E_j]\geq e^{-c\epsilon^2k}$ for some constant $c>0$.
	\end{proof}
	
	Note that as $\{Z_j\}_{j\in[\ell]}$ are mutually independent, the events $\{E_j\}_{j\in[\ell]}$ are also mutually independent. That is,
	\begin{align*}
	\Pr\left[\exists t\in[m],\ \left|\|\Pi f^{(t)}\|_2^2-\|f^{(t)}\|_2^2\right|>2\epsilon\|f^{(t)}\|_2^2\right]&\geq\Pr\left[\cup_{ j\in[\ell]}E_j\right]\\
	&\geq1-\prod_{j\in[\ell]}\Pr\left[\neg E_j\ | \ \neg E_{j'},\ \forall j'\in[j-1]\right]\\
	&\geq1-\left(1-e^{-c\epsilon^2k}\right)^\ell\geq\ell e^{-c\epsilon^2k}.
	\end{align*}
\end{proof}
Namely, there exists another constant $C>0$ such that if $k<C\epsilon^{-2}\left(\log\frac{\log m}{\log(1/\epsilon)} + \log(1/\delta)\right)\leq\frac{1}{c}\epsilon^{-2}\log \frac{\ell}{\delta}$. Thus, \textsf{AMS} sketch does not provide $(\epsilon,\delta)$-strong tracking for all $\epsilon\in(0,0.1)$.

\subsection{Strong tracking lower bound for \textsf{CountSketch}}\label{sec:proof strong CS}
To prove~\autoref{thm:CS no strong}, we are going to construct a stream such that any \textsf{CountSketch} does not provide strong tracking. Let's start from some observation. For any $i\neq i'\in[n]$ and $a>0$, let $\bx=a(\be_i+\be_{i'})$ such that $\|\bx\|_2^2=2a^2$. Now, observe that If $\Pi_i=\Pi_{i'}$, then we have $\|\Pi \bx\|_2^2=4a^2$. If $\Pi_i=-\Pi_{i'}$, then we have $\|\Pi \bx\|_2^2=0$. Note that in both cases, the approximation $\|\Pi \bx\|_2^2$ and the correct answer $\|\bx\|_2^2$ has a huge gap $2a^2$, \textit{i.e.,} $\left|\|\Pi \bx\|_2^2-\|\bx\|_2^2 \right|\geq\|\bx\|_2^2$.

With the above observation, one can see that a collision (either $\Pi_i=\Pi_{i'}$ or $\Pi_i=-\Pi_{i'}$) is a sufficient condition for an estimation error. As a result, to show \textsf{CountSketch} does not provide strong tracking, it suffices to show the following two things: (i) there will be some collision with constant probability and (ii) construct a stream such that once a collision happens, the estimation error is large.

Note that (ii) is very specific to tracking since unlike $\ell_2$ estimation which only cares about the final estimation, we need to keep track of the estimation at any time. Thus, to show the impossibility of tracking, we have to show that the estimation fails at least once at some point.

\begin{proof}[Proof of~\autoref{thm:CS no strong}]
	Let $n$ be the number of elements and $k$ be the number of rows of \textsf{CountSketch}. Let $\Delta=\ceil*{100/\epsilon}$ and $w=\ceil*{1/\epsilon}$.
	For any $j\in[\ell]$, define $t_j=\sum_{j'\in[j]}\Delta^{j'+1}=\frac{\Delta^{j+1}-\Delta^1}{\Delta-1}$ and the stream at time $t_j$ as follows.
	\begin{equation*}
	f^{(t_j)} = \left(\underbrace{\Delta, \dots, \Delta}_{w}, \underbrace{\Delta^2, \dots, \Delta^2}_{w}, \underbrace{\Delta^j, \dots, \Delta^j}_{w},0,\dots,0\right).
	\end{equation*}
	We have $\|f^{(t_j)}\|_2^2=\sum_{j'\in[j]}w\cdot\Delta^{2j'+1}=\frac{w\cdot\Delta^{2j+2}-w\cdot\Delta^2}{\Delta^2-1}$.
	Note that one can easily complete rest of the stream $\{f^{(t)}\}_{t\in[m]}$ for any $m\geq t_\ell$. Note that here we can pick $\ell=\Theta(\frac{\log m}{\log(1/\epsilon)})$.
	
	Define the event $E_j:=\{\|\Pi f^{(t_j)}\|_2^2-\|f^{(t_j)}\|_2^2>\epsilon\cdot\|f^{(t_j)}\|_2^2\}$. To show that \textsc{CountSketch} does not provide $w_2$ $(\epsilon,\delta)$-strong tracking, it suffices to prove $\Pr[\cup_{j\in[\ell]}E_j]>\delta$. The following lemma lower bounds the probability of single $E_j$.
	
	\begin{lemma}\label{lem:strong CS 1}
		For each $j\in\ell$, we have $\Pr[E_j\ |\ \neg\cup_{j'\in[j]}E_{j'}]\geq\frac{1}{10k\epsilon^2}$.
	\end{lemma}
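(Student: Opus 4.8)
The plan is to isolate, at each checkpoint $t_j$, the ``heavy block'' $B_j$ of $f^{(t_j)}$ --- its $j$-th block of $w=\lceil1/\epsilon\rceil$ coordinates, all equal to $\Delta^j$ --- and to show that a single hash collision inside $B_j$ already forces $E_j$, provided the previous checkpoint did not fail. Write $v_j$ for the restriction of $f^{(t_j)}$ to $B_j$, so that $f^{(t_j)}=f^{(t_{j-1})}+v_j$ and
\[
\|\Pi f^{(t_j)}\|_2^2-\|f^{(t_j)}\|_2^2
=\bigl(\|\Pi v_j\|_2^2-\|v_j\|_2^2\bigr)+\bigl(\|\Pi f^{(t_{j-1})}\|_2^2-\|f^{(t_{j-1})}\|_2^2\bigr)+2\langle\Pi v_j,\Pi f^{(t_{j-1})}\rangle .
\]
Since \textsf{CountSketch} is norm-preserving up to collisions, the first term equals $2\Delta^{2j}c$ where $c$ is the \emph{signed} number of colliding pairs inside $B_j$, i.e.\ $\|\Pi v_j\|_2^2=(w+2c)\Delta^{2j}$. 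The geometric ratio $\Delta=\lceil100/\epsilon\rceil$ makes the older blocks negligible: $\|f^{(t_{j-1})}\|_2^2\le(1+O(\Delta^{-2}))\Delta^{-2}\|v_j\|_2^2=O(\epsilon)\Delta^{2j}$, so the middle term is $\ge-O(\epsilon)\Delta^{2j}$ with no conditioning at all.

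First I would lower bound the probability of a ``net-positive'' heavy block, $A_j:=\{c\ge1\}=\{\|\Pi v_j\|_2^2\ge(w+2)\Delta^{2j}\}$. For $k\gtrsim w^2$ this follows from a two-term inclusion--exclusion over the $\binom w2$ pairs of $B_j$ (each collides with probability $1/k$) together with the fact that, conditioned on exactly one colliding pair, its two signs agree with probability $\tfrac12$, giving $\Pr[A_j]=\Omega(w^2/k)$. For $k\lesssim w^2$ collisions are abundant, and anti-concentration of $\|\Pi v_j\|_2^2$ --- a degree-two Rademacher chaos with mean $w\Delta^{2j}$ and standard deviation $\Theta(w\Delta^{2j}/\sqrt k)\gtrsim\Delta^{2j}$ --- slightly above its mean gives $\Pr[A_j]=\Omega(1)$. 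In both regimes $\Pr[A_j]=\Omega(\min\{1,w^2/k\})$, which after fixing constants is $\ge\tfrac1{10k\epsilon^2}$ since $w\ge1/\epsilon$. Crucially, $A_j$ depends only on the hashes and signs of the coordinates of $B_j$, which are disjoint from those determining $E_1,\dots,E_{j-1}$, so $A_j$ is independent of $\neg\bigcup_{j'<j}E_{j'}$ (for the fully independent version of \textsf{CountSketch}; under $8$-wise independence a standard but more careful accounting is required).

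Next I would prove $A_j\cap\neg E_{j-1}\subseteq E_j$ by estimating the three terms on that event. The first is $(c-w)\Delta^{2j}\ge2\Delta^{2j}$; the middle is $\ge-O(\epsilon)\Delta^{2j}$; and the cross term is at most $2\|\Pi v_j\|_2\|\Pi f^{(t_{j-1})}\|_2$ in absolute value, where $\|\Pi v_j\|_2=\sqrt{w+2c}\,\Delta^j$ and, by $\neg E_{j-1}$ and the geometric bound, $\|\Pi f^{(t_{j-1})}\|_2^2\le(1+\epsilon)\|f^{(t_{j-1})}\|_2^2\le2w\Delta^{2j-2}$. Hence the cross term is at most $\tfrac{2\sqrt{2w(w+2c)}}{\Delta}\Delta^{2j}$, and since $\sqrt{w(w+2c)}\le w+c$ by AM--GM and $\Delta=\Theta(w)$, the sum of the first and third terms is $\ge1.8\,\Delta^{2j}$ \emph{uniformly} in $c\ge w+2$. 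Therefore $\|\Pi f^{(t_j)}\|_2^2-\|f^{(t_j)}\|_2^2\ge1.8\,\Delta^{2j}-O(\epsilon)\Delta^{2j}$, whereas $\epsilon\|f^{(t_j)}\|_2^2\le(1+\epsilon)(1+O(\Delta^{-2}))\Delta^{2j}<1.6\,\Delta^{2j}$ for $\epsilon<\tfrac12$ (using $\epsilon w<1+\epsilon$ and $\Delta\ge100/\epsilon$); so indeed $A_j\cap\neg E_{j-1}\subseteq E_j$. Combining with the independence above,
\[
\Pr\bigl[E_j\mid\neg{\textstyle\bigcup_{j'<j}}E_{j'}\bigr]\ \ge\ \Pr\bigl[A_j\cap\neg E_{j-1}\mid\neg{\textstyle\bigcup_{j'<j}}E_{j'}\bigr]\ =\ \Pr[A_j]\ \ge\ \frac1{10k\epsilon^2}.
\]

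I expect the main obstacle to be the implication $A_j\cap\neg E_{j-1}\subseteq E_j$: a collision inside $B_j$ only guarantees a two-sided deviation, and turning it into the \emph{one-sided} excess defining $E_j$ is exactly what requires conditioning on $\neg E_{j-1}$ --- this is the only available \emph{upper} bound on $\|\Pi f^{(t_{j-1})}\|_2$, hence the only way to tame the sign-indefinite cross term $\langle\Pi v_j,\Pi f^{(t_{j-1})}\rangle$ --- and it is what forces the separate sparse/dense treatments of the collision count. The anti-concentration estimate for $\|\Pi v_j\|_2^2$ in the dense regime, while intuitively clear, also needs a genuine (if standard) argument.
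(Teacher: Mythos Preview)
Your approach is essentially the paper's: the same decomposition $f^{(t_j)}=f^{(t_{j-1})}+v_j$, the same collision event in the heavy block (the paper uses the sub-event ``exactly one collision with matching signs'' rather than your $\{c\ge1\}$, but the birthday combinatorics are identical), and the same use of $\neg E_{j-1}$ together with Cauchy--Schwarz to bound the cross term. Your ``$(c-w)\Delta^{2j}$'' and ``uniformly in $c\ge w+2$'' are slips for $2c\,\Delta^{2j}$ and $c\ge1$, and the separate dense regime $k\lesssim w^2$ is not needed since the lemma's right-hand side $1/(10k\epsilon^2)\le1$ already forces $k\gtrsim\epsilon^{-2}\asymp w^2$.
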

	\begin{proof}
		First, let $\bar{f}^{(t_j)} = f^{(t_j)} - f^{(t_{j-1})}$ for each $j\in\ell$ where we define $f^{(0)}=\mathbf{0}$. Observe that
		\begin{align*}
		\|\Pi f^{(t_j)}\|_2^2-\|f^{(t_j)}\|_2^2 &= \|\Pi \bar{f}^{(t_j)} + \Pi f^{(t_{j-1})}\|_2^2 - \|\bar{f}^{(t_j)} + f^{(t_{j-1})}\|_2^2\\
		&= \|\Pi\bar{f}^{(t_j)}\|_2^2 - \|\bar{f}^{(t_j)}\|_2^2 + \|\Pi f^{(t_{j-1})}\|_2^2 - \|f^{(t_{j-1})}\|_2^2\\
		&+ 2\langle\Pi\bar{f}^{(t_j)},\Pi f^{(t_{j-1})}\rangle - 2\langle\bar{f}^{(t_j)},f^{(t_{j-1})}\rangle.
		\end{align*}
		Further, condition on $\neg\cup_{j'\in[j-1]}E_{j'}$, we have $\|f^{(t_{j-1})}\|_2^2$, $\|\Pi f^{(t_{j-1})}\|_2^2$,  $|\langle\Pi\bar{f}^{(t_j)},\Pi f^{(t_{j-1})}\rangle|$, and $|\langle\bar{f}^{(t_j)},f^{(t_{j-1})}\rangle|$ are all at most $(\epsilon/10)\cdot\|f^{(t_j)}\|_2^2$ by the choice of $\Delta$. Namely,
		\begin{equation}\label{eq:strong CS 1}
		\|\Pi f^{(t_j)}\|_2^2-\|f^{(t_j)}\|_2^2 \geq \|\Pi \bar{f}^{(t_j)}\|_2^2 - \|\bar{f}^{(t_j)}\|_2^2 - \frac{\epsilon}{2}\cdot\|f^{(t_j)}\|_2^2.
		\end{equation}
		\begin{lemma}\label{lem:strong CS 2}
			$\Pr\left[\|\Pi \bar{f}^{(t_j)}\|_2^2 - \|\bar{f}^{(t_j)}\|_2^2>3\epsilon\cdot\|f^{(t_j)}\|_2^2\right]>\frac{1}{10k\epsilon^2}$.
		\end{lemma}
		\begin{proof}
			Let us consider the columns of $\Pi$ that correspond to the non-zero entries of $\bar{f}^{(t_j)}$. That is, column $\Delta\cdot(j-1)+1$ to $\Delta\cdot j$.
			Note that once there are exactly one collision happens among these columns and the both the value are the same, then $\|\Pi\bar{f}^{(t_j)}\|_2^2 - \|\bar{f}^{(t_j)}\|_2^2>3\epsilon\cdot\|f^{(t_j)}\|_2^2$. The probability of the above to happen is at least the following.
			\begin{equation*}
			\frac{1}{2}\cdot\frac{k\cdot\binom{w}{2}\cdot(k-1)\cdot(k-2)\cdots(k-w+2)}{k^w}\geq\frac{w^2}{5k}>\frac{1}{10k\epsilon^2}.
			\end{equation*}
		\end{proof}
		Now,~\autoref{lem:strong CS 1} immediately follows from~\autoref{eq:strong CS 1} and~\autoref{lem:strong CS 2}.
	\end{proof}
	Let us wrap up the proof of~\autoref{thm:CS no strong} as follows.
	\begin{align*}
	\Pr\left[\exists t\in[m],\ \left|\|\Pi f^{(t)}\|_2^2-\|f^{(t)}\|_2^2>\epsilon\|f^{(t)}\|_2^2 \right|\right] &\geq\Pr\left[\cup_{j\in[\ell]}E_j\right]\\
	&=\prod_{j\in[\ell]}\Pr\left[E_j\ |\ \neg\cup_{j'\in[j-1]}E_{j'}\right]\\
	&\geq\left(1-\frac{1}{10k\epsilon^2}\right)^{\ell}\geq1-\frac{\ell}{k\epsilon^2}.
	\end{align*}
	By the choice of parameters, the last quantity would be greater than $\delta$ and thus \textsc{CountSketch} with $k\leq C\cdot\epsilon^{-2}\delta^{-1}\frac{\log(m)}{\log(1/\epsilon)}$ rows does not provide $\ell_2$ $(\epsilon,\delta)$-strong tracking.
\end{proof}

\end{document}